\documentclass[11pt]{article}

\usepackage[margin=1in]{geometry}
\usepackage{amsmath,amssymb,amsthm,mathtools}
\usepackage{booktabs}
\usepackage{enumitem}
\usepackage[protrusion=true,expansion=true,activate={true,nocompatibility},final,tracking=true,kerning=true,spacing=true,stretch=10,shrink=10,disable]{microtype}
\usepackage{natbib}
\usepackage[colorlinks=true,linkcolor=blue,citecolor=blue,urlcolor=blue]{hyperref}
\usepackage{setspace}
\usepackage{microtype}
\usepackage{pgfplots}
     \pgfplotsset{compat=1.17}

\newtheorem{assumption}{Assumption}
\newtheorem{proposition}{Proposition}

\newtheorem{corollary}{Corollary}
\newtheorem{remark}{Remark}
\newtheorem{definition}{Definition}

\newcommand{\ate}{\mathrm{ATE}}

\DeclareMathOperator{\Cov}{Cov}
\DeclareMathOperator{\Var}{Var}
\newcommand{\E}{\mathbb{E}}
\newcommand{\R}{\mathbb{R}}
\newcommand{\1}{\mathbf{1}}
\newcommand{\cZ}{\mathcal{Z}}
\newcommand{\cX}{\mathcal{X}}
\newcommand{\cG}{\mathcal{G}}
\newcommand{\cY}{\mathcal{Y}}
\newcommand{\dd}{\mathrm{d}}
\newcommand{\Perp}{\perp \!\!\! \perp}

\title{Stochastic Potential Choices and Outcomes\footnote{This paper is inspired by conversations with and notes by the late Gary Chamberlain.  We are grateful to Pedro Carneiro for comments on earlier drafts.  de Paula thanks the generous funding from the UK Research and Innovation (UKRI) under the UK government's Horizon Europe funding guarantee (Grant Ref: EP/X02931X/1) and the Economic and Social Research Council (ESRC) funding through the ESRC Institute for the Microeconomic Analysis of Public Policy (Grant Ref: ES/T014334/1).}}
\author{\'{A}ureo de Paula\thanks{Department of Economics, University College London, Institute for Fiscal Studies and Cemmap. } \and Elie Tamer\thanks{Department of Economics, Harvard University}}
\date{\today}

\begin{document}
\maketitle

\begin{abstract}
\noindent Applied econometricians typically model each individual as having fixed outcomes under treatment and control and, in instrumental-variables (IV) settings, fixed treatment decisions under each value of the instrument. This paper asks what changes when outcomes and treatment allocations or choices are stochastic at the individual level. In the model, each individual has a stable (but possibly stochastic) response type consisting of two objects: a treatment choice probability under each state, $p_i(z)$, and a potential outcome distribution under each treatment-state pair, $Q_i(.|d,z)$. These stochastic potential outcomes change the interpretation of some familiar estimators. For instance, in the deterministic IV model, the estimand identifies treatment effect only for compliers—those whose treatment status switches with the instrument. Under stochastic treatment allocation or choice there is no such discrete subgroup: the estimand averages effects over the population, weighting each individual by how much the instrument, policy, or assignment rule moves their probability of treatment.
The paper then gives an information-based foundation for stochastic choice, in which individuals act on expected gains given their information. Finally, repeated choices give the stable-kernel formulation empirical content: short panels identify moments of individual treatment probabilities, while long panels identify the joint dependence between treatment effects and the probability movements induced by an instrument or policy.
\end{abstract}

\noindent\textbf{Keywords:} stochastic potential outcomes, instrumental variables, compliance, stochastic choice, probabilistic causation, IV.

\newpage

\section{Introduction}

The standard potential outcomes framework in econometrics and statistics usually treats each individual's potential outcomes as fixed numbers (\cite{neyman1923}, \cite{rubin1974}, see also \cite{haavelmo1944}, p.23). In an instrumental-variables (IV) design, it also usually treats each individual's potential treatment choices as fixed numbers. For example, in a binary instrument with values $z'$ and $z$, individual $i$ has potential treatment statuses
\[
        D_i(z'),D_i(z)\in\{0,1\}.
\]
Changing the instrument value either changes the person's treatment status or it does not. Under monotonicity, the individuals whose status changes are the compliers, and the Wald estimand identifies the average treatment effect for that group (see \cite{imbensangrist1994},~\cite{angristimbensrubin1996}).

That representation is powerful, but has a flavor of determinism  (or, as put by \cite{dawid2000}, ``fatalism''), and can be too sharp for some empirical settings.  For instance, a reminder may make enrollment more likely without making it certain. A subsidy may raise take-up likelihood from $20\%$ to $50\%$ for one person and from $70\%$ to $75\%$ for another. A default rule may increase participation probabilities for some people while leaving others unchanged. A decision aid may draw some people into treatment and others out of treatment because it reduces mistakes on both sides of a threshold.   In fact, when asked about the chances of choosing particular occupations (\cite{arcidiaconoetal2020}), pregnancy and labour supply conditional on pregnancy status (\cite{briggsetal2024}) or buying a service under different scenarios (\cite{blasslachmanski2010}), respondents often provide probabilities strictly between $0\%$ and $100\%$.  In these examples, the stable object is not necessarily a deterministic treatment status under each instrument or policy value. It may instead be that person's non-trivial probability of treatment under each state.

This paper considers that possibility. It starts from stochastic potential outcomes and stochastic potential treatment choices or allocations. For each individual $i$ and state $z$, let
\[
        p_i(z)=\Pr(D_i(z)=1\mid G_i)
\]
denote the individual's probability of treatment under state $z$, conditional on a stable response type $G_i$. The type also contains a potential outcome distribution
\[
        Q_i(\cdot\mid d,z) = \Pr(Y_i(d,z)\in \cdot \mid G_i),
\]
so the potential outcome under treatment $d$ and state $z$ need not be a fixed number. The corresponding mean potential outcome is
\[
        \mu_i(d,z)=\int y\,Q_i(\dd y\mid d,z).
\]
We view $G_i \equiv (Q_i, p_i)$ as a person's stable ``type'' and explicitly condition the relevant probabilities above on $G_i$ to highlight this. The word stochastic refers to the fact that, even after conditioning on the type and fixing the state $z$, treatment receipt may remain random.  Similarly, the outcomes conditional on types for fixed state $z$ and treatment value $d$ may remain random.  In the present paper, the random type $G_i$ plays the role of the relevant background context (includes observed covariates), and $p_i(z')-p_i(z)$ is the type-specific probability change induced by a change in or contrast between states.

This approach has precedents in statistics and philosophy. In the statistics literature, stochastic counterfactual or potential outcomes generalize deterministic ones by allowing a given intervention to generate an individual-specific distribution of outcomes rather than a single fixed outcome \citep{greenland1987,robinsgreenland1989,robinsgreenland2000,vanderweelerobins2012}. Dawid's decision-theoretic approach similarly emphasizes regimes or decision settings and the conditions under which information can be transported across them \citep{dawid2000,dawid2021}. In philosophy, the probabilistic-causation literature starts from the idea that causes may raise or lower the probabilities of their effects rather than determine them \citep{hitchcock2023}. Cartwright’s work connects directly to the probabilistic-causation tradition, in which a cause is understood as raising the probability of its effects. Her central qualification is that probability raising is causally informative only relative to a suitably homogeneous background: unless other causally relevant factors are held fixed, observed probability differences may conceal, offset, or even reverse the operation of a genuine cause. Causal claims therefore cannot be reduced to probabilistic associations alone but require an account of the causal capacities or mechanisms that generate them \citep{cartwright1979,cartwright1989}. In this respect, the random-type formulation is close in spirit to probabilistic causation, but with an important qualification: the probability ($p_i(z)$) is not an observed association or a population-level probability-raising relation; it is a type-specific causal disposition, defined within a stable structural context. The state changes treatment probabilities because it activates or modifies a mechanism, not merely because it is statistically associated with treatment receipt. The framework has also been occasionally employed in econometrics (see, for example, \cite{abbringvandenberg2003} or \cite{dinardolee2011}).\footnote{See also related interesting work in \cite{chesher2021counterfactual}. Their analysis does not focus on stochastic potential outcomes, but on potential ``processes'' or ``sets of feasible counterfactual outcomes obtained by exogenously shifting'' the treatment or choice variable while holding what we call the state fixed. } 

For some parameters and estimation protocols, stochastic potential outcomes will lead to slight adaptations in interpretation.  For instance, an Average Treatment Effect (ATE) can be defined in terms of differences in potential \emph{mean} outcomes:
\[
        ATE = E[E\{Y_i(1)\mid G_i\} - E\{Y_i(0) \mid G_i\}] = E[\mu_i(1) - \mu_i(0)]
\]
where the unconditional expectations are taken across individuals and $\mu_i(d), d=0,1$ is the mean individual potential outcome under $d$ presented previously marginalized against the distribution of $Z$.  In this case, under appropriate conditions the usual estimators for the ATE under deterministic potential outcomes (e.g., difference in treatment and controls means when $\mu_i(d) \Perp D_i$) will aim at the ATE as generalized above for potential outcomes.\footnote{\cite{hernanrobins2025} point out that whether potential outcomes are stochastic or deterministic may have repercussions to uncertainty quantification depending on whether a model- or design-based inferential paradigm is adopted (see Section 10.3).} 

On the other hand, when the estimator is nonlinear in outcomes \emph{or} treatments (as is typical, for example, in IV methods), the same estimation protocols will correspond to estimands with a more distinct interpretation under stochastic potential outcomes and stochastic treatment choice or assignment. For example, in a deterministic model with a binary treatment and a binary instrumental variable, the instrumental variable partitions individuals into compliers, defiers, always-takers and never-takers (\cite{angristimbensrubin1996}). In the stochastic model, this is no longer the case: the instrument instead induces a distribution of probability changes across individuals. Those probability changes in turn determine weights in the estimand.

To elaborate on this, let $z$ and $z'$ denote two states-of-the-world. The states may correspond to an assignment rule, a price, a subsidy, a default rule, an information regime, or another policy environment. In a conventional IV application, the state is the instrumental variable; the outcome contrast given by $\E[Y_i\mid Z_i=z']-\E[Y_i\mid Z_i=z]$ is the reduced form; the corresponding treatment contrast given by $\E[D_i\mid Z_i=z']-\E[D_i\mid Z_i=z]$ is the first stage; and their ratio is the Wald estimand. We use the more general word state only because the same logic applies beyond binary instruments.

The key object is the individual responsiveness to a state contrast:
\[
        r_i(z',z)=p_i(z')-p_i(z).
\]
This is the stochastic analogue of a switching indicator. In the deterministic IV model, $D_i(z')-D_i(z)$ is equal to one for compliers, zero for never-takers and always-takers, and minus one for defiers.  Under monotonicity, the latter group is assumed away. In the stochastic model, responsiveness can be fractional. An individual with $p_i(z')=0.60$ and $p_i(z)=0.40$ contributes 0.20 units of responsiveness to the contrast $(z',z)$. She is not naturally an always-taker, never-taker, complier or defier. She is partially responsive.

The object of interest remains a contrast in potential outcomes. Under an exclusion restriction (i.e., $\mu_i(d,z) = \mu_i(d)$ for any $i, d$ and $z$), define the individual mean treatment effect as:
\[
        \Delta_i=\mu_i(1)-\mu_i(0).
\]
Variation in treatment helps us learn about these $\Delta_i$'s through the individual treatment probability movements it creates. Under independence of the state $Z_i$ from the response type $G_i$ and under the exclusion restriction above, the (average) outcome movement generated by the contrast $(z',z)$ is:
\[
        \E[Y_i\mid Z_i=z']-\E[Y_i\mid Z_i=z] = \E[\Delta_i r_i(z',z)],
\]
while the (average) treatment movement generated by the contrast $(z',z)$ is:
\[
        \E[D_i\mid Z_i=z']-\E[D_i\mid Z_i=z] = \E[r_i(z',z)].
\]
When the latter is nonzero, the outcome-treatment contrast ratio corresponds to the Wald estimand and identifies
\begin{equation}
        \theta(z',z)
        =\frac{\E[\Delta_i r_i(z',z)]}{\E[r_i(z',z)]}=\E[\Delta_i W_i(z',z)],
        \qquad
        W_i(z',z)=\frac{r_i(z',z)}{\E[r_i(z',z)]}.
        \label{eq:introtheta}
\end{equation}
Thus, the reduced form and the first stage reveal an average of treatment effects weighted by the pattern of treatment-probability movements generated by that particular source of variation (see, e.g., \cite{dinardolee2011}).

This formula nests familiar cases. If $r_i(z',z)$ is constant across individuals, the contrast recovers the ATE. If $p_i(z),p_i(z')\in\{0,1\}$ and $p_i(z')\geq p_i(z)$, treatment choice is deterministic and monotonic, and hence $r_i(z',z)\in\{0,1\}$.  Consequently, the contrast recovers the Local Average Treatment Effect (LATE) \citep{imbensangrist1994} for deterministic switchers. If $r_i(z',z)\geq 0$ but not necessarily zero or one, the contrast is a convex average weighted by probabilistic responsiveness. Perhaps more subtly, even if one interprets $r_i(z',z) \ge 0$ as a relaxation of the conventional monotonicity condition towards a stochastic or probabilistic monotonicity one (see, e.g., \cite{dinardolee2011}) and classifies those for whom $r_i(z',z) > 0$ as ``stochastic compliers,'' the estimand corresponds to a weighted average among those individuals with differing individual weights.  If $r_i(z',z)$ changes sign, the ratio is not a convex average. It is a signed comparison between treatment effects among types whose treatment probabilities rise and treatment effects among types whose treatment probabilities fall.

The distinction we make between an ex ante stochastic representation and an ex post deterministic representation is not a claim about different sampling protocols for the observed data. As Appendix~\ref{app:equivalence} shows, in a static cross-section the two representations generate the same observable conditional means relevant for reduced forms, first stages and Wald estimands. In that environment, the value of the stochastic formulation is interpretive: it takes \(z\mapsto p_i(z)\) as the stable ex ante object and uses changes in this map to reinterpret the treatment-effect parameters targeted by familiar econometric estimators. With repeated choices, however, stability of this map imposes additional restrictions and gives the formulation empirical content. Hence, the main message is simple. Under stochastic potential outcomes and stochastic choice, an instrument or policy does not merely select compliers. It induces a weighting operator over treatment effects. To interpret a reduced form, first stage, or Wald ratio, one must ask whose treatment probabilities are changed, by how much, and in which direction.

To that end, the paper makes four contributions. First, it revisits the stochastic potential outcomes model and views it in terms of compliance and treatment response. Each individual has a treatment choice probability under each state and a potential outcome distribution under each treatment-state pair. Deterministic potential outcomes and deterministic compliance types are nested as degenerate cases. This modeling change does not by itself create new estimators; it changes the interpretation of familiar estimators.

Second, the paper characterizes what standard contrast estimators target under this stochastic formulation. The IV/Wald estimand is not, in general, the mean effect for a discrete complier group. It is a weighted average over individuals, with weights proportional to how much the instrument changes each person's probability of treatment. In the deterministic LATE model, only compliers get positive and equal weight. In the stochastic model, all individuals may enter the estimand, but with different weights. This also clarifies when the estimand is an average and when it is only a signed contrast.

Third, the paper gives a possible behavioral interpretation of the weights. In a threshold-choice model in which the state changes either the perceived treatment margin or the information used to form it, local responsiveness decomposes into a margin-density term and a mechanism-loading term. The margin-density term captures how close the person is to indifference; the mechanism-loading term captures how strongly the instrument or policy moves that person's adoption margin. This generalizes the familiar IV insight. With deterministic compliance,
different instruments may identify different effects because they move different
complier groups. With stochastic choice, there need not be discrete complier
groups; instruments instead differ in the treatment-probability movements they
induce across individuals.

Fourth, the paper studies repeated observations on the same individuals. Within a stable, conditionally independent choice-kernel model, short panels identify moments of the individual treatment probabilities and give one-sided responsiveness refutable implications. Long panels identify the joint schedule relating mean treatment effects to the vector of treatment probabilities across states. This schedule reveals the latent weighting operator behind IV and permits researchers to compare the estimands generated by different instruments and first-stage specifications. The deterministic, time-invariant response model is a degenerate boundary of this panel model; the paper is explicit that the resulting tests concern this boundary, rather than determinism without restrictions on latent choice paths.

\section{Stochastic Potential Treatments and Outcomes}
\label{sec:model}

For each individual $i$, we observe
\[
        O_i=(Y_i,D_i,Z_i,X_i),
\]
where $Y_i\in\R$ is the observed outcome, $D_i\in\{0,1\}$ is the observed treatment receipt, $Z_i\in\cZ$ is the observed state, and $X_i\in\cX$ is a vector of covariates.

The primitive object is a stable (stochastic) response type. For each individual $i$, let $G_i\in\cG$ denote a latent type containing the following two kernels.  

\begin{definition}[Stochastic Response Type] \label{def:sttypes}
A stochastic response type $G_i$ consists of:
\begin{enumerate}[label=(\roman*)]
    \item a treatment choice kernel
    \[
        p_i(d\mid z)=\Pr(D_i(z)=d\mid G_i),
        \qquad d\in\{0,1\},\ z\in\cZ,
    \]
    with $p_i(z)=p_i(1\mid z)$, treatment values $d \in \{0, 1\}$ and state values $z \in \cZ$; and
    \item an outcome kernel
    \[
        Q_i(A\mid d,z)=\Pr(Y_i(d,z)\in A\mid G_i),
    \]
    for measurable sets $A\subseteq\R$, treatment values $d\in\{0,1\}$, and state values $z\in\cZ$.
\end{enumerate}
In addition, the type includes covariates $X_i$: $G_i \equiv (X_i, Q_i, p_i).$
\end{definition}

Note that the covariates $X_i$ are included in the definition of the type $G_i$.
Also, the notation in the above definition is meant to indicate that this outcome kernel is an autonomous  ``production function'' at the individual level and can be evaluated at any value of $(d,z)$ (i.e., it is a ``causal object''). This means that the distribution $Q_i(\cdot\mid d,z)$ is not the observed conditional distribution of $Y_i$ given $D_i=d$ and $Z_i=z$. It is the type-specific distribution of the potential outcome under treatment value $d$ and state value $z$.  The corresponding mean potential outcome under treatment value $d$ and state value $z$ is thus:
\[
        \mu_i(d,z) \equiv \int y\,Q_i(\dd y\mid d,z).
\]
Similarly, the distribution $p_i(\cdot \mid z)$ is not the observed conditional distribution of $D_i$ given $Z_i=z$.  It is the type-specific distribution of the potential treatment under state value $z$.

The observed variables are such that:
\[
        D_i=D_i(Z_i)
        \qquad \textrm{and} \qquad
        Y_i=Y_i(D_i(Z_i),Z_i).
\]

When potential outcomes and treatments are deterministic, the objects
\(D_i(\cdot)\) and \(Y_i(\cdot)\) are individual-specific functions. In the
stochastic formulation, however, even after fixing the type \(G_i\) and the
state \(z\), treatment receipt may remain random.  Similarly, fixing $G_i$, $d$ and $z$, outcomes may remain random.

The type \(G_i\) has been defined through two marginal objects: the
treatment choice kernel \(p_i(\cdot\mid z)\) and the potential outcome kernel
\(Q_i(\cdot\mid d,z)\). These marginal kernels are the objects we want to use
in the estimand. But by themselves they do not specify how the residual
randomness in treatment choice is related to the residual randomness in
potential outcomes within the same type. We therefore impose the following
mean-separability condition.

\begin{assumption}[No within-type selection on residual outcome shocks]
\label{ass:separability}
For every state value \(z\in\mathcal Z\) and treatment value
\(d\in\{0,1\}\),
\[
\E\{Y_i(d,z)\mid G_i,D_i(z)=d\}
=
\E\{Y_i(d,z)\mid G_i\}
=
\mu_i(d,z),
\]
whenever \(\Pr(D_i(z)=d\mid G_i)>0\).
\end{assumption}

Assumption~\ref{ass:separability} is not an independence condition for the
state or instrument. It is a condition on the stochastic representation within
a fixed type. It says that, after conditioning on \(G_i\), the residual event
that treatment is realized under state \(z\) does not change the mean of the
potential outcome under \((d,z)\).\footnote{A sufficient condition is a residual-shock representation
\(D_i(z)=D(z,G_i,U_i^D)\), \(Y_i(d,z)=Y(d,z,G_i,U_i^Y)\), with
\(U_i^Y\Perp U_i^D\mid G_i\). This is stronger than needed: the assumption
requires only mean independence of the residual outcome shock from the residual
treatment choice event within type. \label{assumption1}} The assumption allows arbitrary selection
across types. For example, types with larger treatment gains may also have
larger treatment probabilities. What it rules out is residual mean selection
within the same type.

Under Assumption~\ref{ass:separability}, the type-specific mean outcome under
state \(z\) can be written as the simple mixture
\[
m_i(z)
=
\sum_{d\in\{0,1\}}p_i(d\mid z)\mu_i(d,z)
=
\{1-p_i(z)\}\mu_i(0,z)+p_i(z)\mu_i(1,z).
\]
Thus \(m_i(z)\) is the mean outcome for type \(i\) when the state is set to
\(z\), averaging over the stochastic treatment choice generated by that state. The role of Assumption~\ref{ass:separability} is only to justify this mixture
representation using the marginal kernels \(p_i(\cdot\mid z)\) and
\(Q_i(\cdot\mid d,z)\).\footnote{If one instead defined the type more broadly as the
full joint law of
$\{D_i(z),Y_i(0,z),Y_i(1,z)\}_{z\in\mathcal Z},$ then one could take
$m_i(z)=\E\{Y_i(D_i(z),z)\mid G_i\}$
as primitive, and the no-within selection assumption would not be needed.
Similarly, in the deterministic ex post formulation, the condition is automatic:
once the full type is fixed, there is no residual treatment choice or outcome
randomness left.}

\section{What Observed Contrasts Identify}
\label{sec:observedcontrasts}

For two state values $z,z'\in\cZ$, define the \emph{observed} outcome and treatment contrasts as:
\[
        C_Y(z',z) \equiv \E[Y_i\mid Z_i=z']-\E[Y_i\mid Z_i=z] ~ \textrm{and} ~ C_D(z',z) \equiv \E[D_i\mid Z_i=z']-\E[D_i\mid Z_i=z].
\]
Without assumptions, these are descriptive contrasts. To give them causal content, the state must be independent of the latent response type, either unconditionally or conditional on covariates. 

First, we nonetheless start with a simple observation regarding the worst case bounds for the mean treatment effect. Then we state the independence assumption and derive the causal interpretation of the contrasts under that assumption.

\begin{remark}[Manski Worst-Case Bounds as Bounds Over Random Types]
\label{rem:manski}
Suppress the state argument or, equivalently, consider a fixed state and suppose that
\(Y_i \in [\underline y,\overline y]\) almost surely. Let
\[
\overline{p}=\Pr(D_i=1), \qquad
\overline{m}_1=E[Y_i\mid D_i=1], ~ \textrm{and} ~
\overline{m}_0=E[Y_i\mid D_i=0],
\]
where we use overline notation to distinguish the observable moments above from the latent ones introduced previously.  Under consistency and Assumption \ref{ass:separability}, the observed treated outcomes identify
the treated-population component
\[
E[D_i\mu_i(1)] = \overline{p} \times \overline{m}_1,
\]
and the observed untreated outcomes identify the untreated-population component
\[
E[(1-D_i)\mu_i(0)] = (1-\overline{p}) \times \overline{m}_0.
\]
The missing components (i.e., mean outcome under treatment for those untreated and mean outcome without treatment for those treated) are bounded only by the outcome support:
\[
(1-\overline{p})\underline y
\le
E[(1-D_i)\mu_i(1)]
\le
(1-\overline{p})\overline y,
\]
and
\[
\overline{p}\underline y
\le
E[D_i\mu_i(0)]
\le
\overline{p}\overline y.
\]
Therefore, the sharp worst-case bounds for
\[
ATE = E[\mu_i(1)-\mu_i(0)]
\]
are
\[
\overline{p}\,\overline{m}_1
+(1-\overline{p})\underline y
-\{\overline{p}\,\overline y+(1-\overline{p})\overline{m}_0\}
\leq
ATE
\leq
\overline{p}\,\overline{m}_1
+(1-\overline{p})\overline y
-\{\overline{p}\,\underline y+(1-\overline{p})\overline{m}_0\}.
\]

\noindent The stochastic potential outcomes formulation does not sharpen these Manski-style bounds (\cite{Manski2003,Manski1990}).
It only changes their interpretation. The identified set ranges over all random-type
distributions and outcome distribution  \(Q_i(\cdot\mid d)\) consistent with the observed
treatment-specific outcome means and the support restriction. Treated observations identify
the treated-population mean component of the treated potential outcome distribution, not each
treated individual's latent mean \(\mu_i(1)\). Analogously, untreated observations identify the
untreated-population mean component of \(\mu_i(0)\), not each untreated individual's latent
mean. Thus, for mean effects, completing missing outcome kernels yields the same worst-case
identified set as completing missing deterministic potential outcome values. No state-independence or
treatment choice independence assumption is used.
\end{remark}

 The above bounds obviously hold without any assumptions (other than support conditions and Assumption \ref{ass:separability}). Now, we state the independence assumption that is key to obtaining causal interpretations of the contrasts introduced earlier.
\begin{assumption}[State Independence]
\label{ass:stateindependence}
For the states being compared, either
\[
        Z_i\Perp G_i,
\]
or
\[
        Z_i\Perp G_i\mid X_i,
\]
with overlap for the relevant states at each covariate value in the target population. In addition, conditional on \(G_i\), assignment is independent of the residual draws realizing \(D_i(z)\) and \(Y_i(d,z)\), for all relevant \(d\) and \(z\).
\end{assumption}
This is the analogue of random assignment or instrument independence.   We also require that, conditional on $G_i$ and, where applicable, $X_i$, assignment is independent of the residual draws that realize the treatment choice and potential outcome kernels. In potential outcome terminology, it implies exchangeability of the relevant potential state outcomes across values of $Z_i$. (Under deterministic potential variables, this reduces to independence from
$\{D_i(z),Y_i(0,z),Y_i(1,z):z\in\mathcal Z\}$.)

The primitive causal object associated with changing the state from $z$ to $z'$ is the average state effect
\begin{equation}
        \Psi(z',z)=\E[m_i(z')-m_i(z)].
        \label{eq:stateeffect}
\end{equation}
This is the average effect of assigning state $z'$ rather than state $z$. At this level, the state may affect outcomes by changing treatment probabilities, by changing outcome distributions directly, or both. The next result shows how this average state effect relates to observables. 

Under unconditional state independence,
\begin{align*}
        \E[Y_i\mid Z_i=z]
        &=\E\{\E[Y_i\mid Z_i=z,G_i]\mid Z_i=z\} \\
        &=\E[m_i(z)\mid Z_i=z] \\
        &=\E[m_i(z)].
\end{align*}
The first equality is the law of iterated expectations. The second uses the definition of the type-specific mean outcome under state $z$. The third uses $Z_i\Perp G_i$, since $m_i(z)$ is a function of $G_i$. Thus
\[
        C_Y(z',z)=\Psi(z',z).
\]
No exclusion restriction is needed for this interpretation. A reduced form can be a causal effect of the state even when it is not a treatment effect.

\begin{proposition}[Reduced form as the causal effect of the state]
\label{prop:state_itt}
Suppose Assumptions~\ref{ass:separability} and the unconditional version of Assumption \ref{ass:stateindependence} hold.
Then, for each state value \(z\),
\[
        \E[Y_i\mid Z_i=z]=\E[m_i(z)].
\]
Consequently, for any two state values \(z,z'\in\mathcal Z\),
\[
        C_Y(z',z)
        =
        \E[Y_i\mid Z_i=z']-\E[Y_i\mid Z_i=z]
        =
        \E[m_i(z')-m_i(z)]
        =
        \Psi(z',z).
\]
Thus the reduced form identifies the causal effect of assigning state \(z'\)
rather than state \(z\). No exclusion restriction is required for this
interpretation.
\end{proposition}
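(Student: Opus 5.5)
The plan is to condition on the type and then integrate over types, using the law of iterated expectations. Fix a state value \(z\) with positive probability, or positive density with the usual regular-conditional-probability caveats. Then
\[
\E[Y_i\mid Z_i=z]=\E\{\E[Y_i\mid Z_i=z,G_i]\mid Z_i=z\},
\]
and there are two things to show. First, the inner conditional expectation equals \(m_i(z)\). Second, the outer conditional expectation of a function of \(G_i\) given \(Z_i=z\) equals its unconditional mean.

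For the inner step, I would use consistency. On the event \(\{Z_i=z\}\) we have \(Y_i=Y_i(D_i(z),z)\), so
\[
\E[Y_i\mid Z_i=z,G_i]=\E[Y_i(D_i(z),z)\mid Z_i=z,G_i].
\]
The second part of Assumption~\ref{ass:stateindependence} says that, given \(G_i\), assignment is independent of the residual draws realizing \(D_i(z)\) and \(Y_i(d,z)\). This lets me drop the conditioning on \(Z_i=z\), which leaves \(\E[Y_i(D_i(z),z)\mid G_i]\).

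I then decompose over \(d\in\{0,1\}\):
\[
\E[Y_i(D_i(z),z)\mid G_i]=\sum_{d}\Pr(D_i(z)=d\mid G_i)\,\E\{Y_i(d,z)\mid G_i,D_i(z)=d\}.
\]
Whenever \(p_i(d\mid z)>0\), Assumption~\ref{ass:separability} replaces each conditional mean by \(\mu_i(d,z)\). Terms with \(p_i(d\mid z)=0\) contribute zero, so the sum is exactly the mixture \(m_i(z)\).

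For the outer step, \(m_i(z)\) is a measurable function of \(G_i\) for the fixed \(z\). By \(Z_i\Perp G_i\), its conditional mean given \(Z_i=z\) equals \(\E[m_i(z)]\), assuming \(m_i(z)\) is integrable, which I would state as a maintained moment condition. Subtracting the identities at \(z'\) and \(z\) then gives \(C_Y(z',z)=\E[m_i(z')-m_i(z)]=\Psi(z',z)\), which is definition~\eqref{eq:stateeffect}. No exclusion restriction enters, because \(\mu_i(d,z)\) is allowed to depend on \(z\) throughout.

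The main obstacle is the inner step's removal of \(Z_i\) from the conditioning set. It requires the joint independence of \(Z_i\) from \((D_i(z),Y_i(\cdot,z))\) given \(G_i\), not merely from the marginal draws separately. I would read the last sentence of Assumption~\ref{ass:stateindependence} as asserting conditional independence of \(Z_i\) from the joint residual draw. Combining this with \(Z_i\Perp G_i\) yields \(Z_i\Perp (G_i,D_i(z),Y_i(\cdot,z))\), which delivers both steps at once.
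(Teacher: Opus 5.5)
Your proposal is correct and follows the paper's proof: iterated expectations over $G_i$, identifying the inner conditional mean with $m_i(z)$, and then using $Z_i\Perp G_i$ to drop the conditioning in the outer expectation. The only difference is that you spell out the inner step (consistency, the residual-draw clause of Assumption~\ref{ass:stateindependence} read as joint independence, and Assumption~\ref{ass:separability}), which the paper compresses into an appeal to the definition of $m_i(z)$.
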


\begin{proof}
By iterated expectations,
\[
\E[Y_i\mid Z_i=z]
=
\E\{\E[Y_i\mid Z_i=z,G_i]\mid Z_i=z\}.
\]
By the definition of the type-specific mean outcome under state \(z\),
\[
\E[Y_i\mid Z_i=z,G_i]=m_i(z).
\]
Therefore,
\[
\E[Y_i\mid Z_i=z]
=
\E[m_i(z)\mid Z_i=z].
\]
Since \(m_i(z)\) is a function of \(G_i\) and \(Z_i\Perp G_i\),
\[
\E[m_i(z)\mid Z_i=z]=\E[m_i(z)].
\]
Taking the difference between \(z'\) and \(z\) gives the result.
\end{proof}
The proposition says that the reduced form, or outcome intention-to-treat (ITT), is already a causal
effect of the assigned state. If \(Z_i\) is an encouragement, information
treatment, default rule, subsidy offer, or assignment rule, then
\(C_Y(z',z)\) is the causal effect of that intervention. The exclusion
restriction is needed only for the additional IV interpretation that this
effect operates solely through treatment receipt \(D_i\).

\begin{remark}[Covariate adjustment]
The unconditional independence assumption can be replaced by
\[
Z_i\Perp G_i\mid X_i.
\]
Then the same formulas hold after replacing raw contrasts by covariate-adjusted
contrasts. For example, one may use
\[
\int
\left\{
\E[Y_i\mid Z_i=z',X_i=x]
-
\E[Y_i\mid Z_i=z,X_i=x]
\right\}
\,dF_X(x),
\]
where \(F_X\) is the covariate distribution in the population of interest.
\end{remark}

\section{Interpreting Familiar Estimators}
\label{sec:familiar}

The previous section showed that, under state independence, the observed
outcome contrast
\[
        C_Y(z',z)
        =
        \E[Y_i\mid Z_i=z']-\E[Y_i\mid Z_i=z]
\]
identifies the causal ITT effect of assigning state \(z'\) rather than state
\(z\):
\[
        C_Y(z',z)
        =
        \E[m_i(z')-m_i(z)].
\]
This is already a causal effect of the assigned state \(Z_i\). For example,
if \(Z_i\) is an encouragement, default rule, information treatment, or subsidy
offer, then \(C_Y(z',z)\) is the ITT effect of that intervention.

One is perhaps more often interested in leveraging variation in the state to retrieve the direct causal effect of treatment receipt \(D_i\) on the outcome.  Aside from the assumptions introduced previously, this requires
an exclusion restriction. 

\begin{assumption}[Exclusion Restriction]
\label{exclusion}
Holding treatment fixed, the state does not directly
affect the outcome:
\[
        Q_i(\cdot\mid d,z)=Q_i(\cdot\mid d)
        \qquad d\in\{0,1\}, \forall z \in \cZ.
\]
\end{assumption}
This implies that
\[
        \mu_i(d,z)=\mu_i(d)
        \qquad d\in\{0,1\}.
\]
\noindent Under exclusion, define then the individual mean effect of treatment receipt by
\[
        \Delta_i=\mu_i(1)-\mu_i(0).
\]
Since
\[
        m_i(z)
        =
        \{1-p_i(z)\}\mu_i(0)+p_i(z)\mu_i(1),
\]
we can write
\[
        m_i(z)
        =
        \mu_i(0)+p_i(z)\Delta_i.
\]
Therefore,
\[
        m_i(z')-m_i(z)
        =
        \Delta_i\{p_i(z')-p_i(z)\}.
\]

It follows that, under no within-sample selection, state independence and exclusion, the outcome ITT or reduced form is
\[
        C_Y(z',z)
        =
        \E[\Delta_i\{p_i(z')-p_i(z)\}],
\]
while we can similarly obtain that the treatment ITT, or first stage, is
\[
        C_D(z',z)
        =
        \E[D_i\mid Z_i=z']-\E[D_i\mid Z_i=z]
        =
        \E[p_i(z')-p_i(z)].
\]
When the first stage is nonzero, the familiar outcome-treatment contrast ratio corresponds to the usual Wald-type object.  The numerator
is the ITT effect of the assigned state, written as a treatment-channel effect.
The denominator is the first stage. Given the conditions above, it identifies
\begin{equation} \label{eq:contrast_ratio}
        \frac{C_Y(z',z)}{C_D(z',z)}
        =
        \frac{\E[\Delta_i\{p_i(z')-p_i(z)\}]}
        {\E[p_i(z')-p_i(z)]}.
\end{equation}

This is the basic formula behind the examples in this section. The stochastic-choice interpretation is that this ratio averages individual
mean treatment effects using weights proportional to how much the state changes
each individual's probability of receiving treatment.
\subsection{Randomized assignment with perfect and imperfect compliance}

Suppose $Z_i\in\{0,1\}$ is randomly assigned. This is a setting of a randomized experiment and hence unconditional state independence holds. If assignment mechanically determines treatment receipt, then
\[
        p_i(1)=1,
        \qquad
        p_i(0)=0
\]
for all individuals. Hence $r_i(1,0)=1$, and under exclusion
\[
        \E[Y_i\mid Z_i=1]-\E[Y_i\mid Z_i=0]=\E[\Delta_i]=\ate.
\]
As is well-known, the perfect-compliance randomized experiment identifies the ATE because the assignment changes every person's treatment probability by the same amount.

With \emph{imperfect} individual compliance (i.e., stochastic potential outcomes and treatments), assignment changes treatment probabilities but need not determine treatment status for a given individual. Under no-within selection, state independence and exclusion, the contrast ratio is given by equation (\ref{eq:contrast_ratio}) and therefore identifies the assignment-responsiveness-weighted average of treatment effects. Because the weights $W_i(1,0)=r_i(1,0)/E[r_i(1,0)]$ average to one, this
weighted average decomposes as
\[
\frac{C_Y(1,0)}{C_D(1,0)}
= E[\Delta_i] + \Cov\bigl(\Delta_i, W_i(1,0)\bigr)
= ATE + \frac{\Cov\bigl(\Delta_i,\, p_i(1)-p_i(0)\bigr)}{E[p_i(1)-p_i(0)]}.
\]
The estimand therefore equals the ATE if and only if treatment effects ($\Delta_i$) are
uncorrelated with responsiveness ($p_i(1)-p_i(0))$: the individuals whom assignment moves most
strongly into treatment must be, on average, neither higher- nor lower-gain
than the population. When they are higher-gain, the estimand exceeds the ATE;
when lower-gain, it falls short. Uniform responsiveness, $p_i(1)-p_i(0)$
constant across individuals, is the leading special case in which the
covariance vanishes automatically.

\subsection{Binary instrumental variables}

    The binary IV setting is the special case in which $Z_i\in\{z,z'\}$ is interpreted as an instrument. The Wald estimand is
\begin{equation} \label{eq:wald}
        \beta^W
        =\frac{\E[Y_i\mid Z_i=z']-\E[Y_i\mid Z_i=z]}{\E[D_i\mid Z_i=z']-\E[D_i\mid Z_i=z]}.
\end{equation}
Under instrument independence, exclusion, no-within selection and relevance (i.e., $\E[D_i\mid Z_i=z']-\E[D_i\mid Z_i=z] \ne 0$),
\[
\beta^W
=
\frac{\E[\Delta_i\{p_i(z')-p_i(z)\}]}
     {\E[p_i(z')-p_i(z)]}
=
\E\!\left[
\Delta_i
\frac{p_i(z')-p_i(z)}
     {\E[p_i(z')-p_i(z)]}
\right].
\]
Thus IV identifies an instrument-responsiveness-weighted treatment effect as previously noted, among others, in \cite{dinardolee2011}\footnote{This representation also appears in \cite{small2017}, who motivate stochasticity from unrepresented versions of the IV and proxy instruments (see further references in the article), thus focussing on measurement and design issues rather than behavior.  They define ``stochastic monotonicity'' in terms of strata based on all measured and unmeasured confounders of the relationship between treatment ant outcome: ``the weight for a group of subjects is proportional to the size of the group times how much higher the probability of taking the treatment for subjects in the group when given the encouraging level of the IV is compared to the non-encouraging level'' (p.563)}.

The deterministic compliance model is nested as a special case. If each individual has deterministic potential treatment choices $D_i(z'),D_i(z)\in\{0,1\}$, then
\[
        p_i(z')=D_i(z'),
        \qquad
        p_i(z)=D_i(z),
\]
and
\[
        r_i(z',z)=D_i(z')-D_i(z).
\]
If monotonicity holds, $D_i(z')\geq D_i(z)$ for all $i$, then
\[
        r_i(z',z)=\1\{D_i(z')>D_i(z)\}.
\]
Equation \eqref{eq:wald} becomes
\[
        \beta^W=\E[\Delta_i\mid D_i(z')>D_i(z)],
\]
the standard Local Average Treatment Effect (LATE) for compliers.

Under stochastic potential treatments or choices, there need not be a discrete complier group. If $p_i(z')\geq p_i(z)$ for all $i$, the Wald estimand is still a convex average (see Section \ref{sec:monotonicity}), but the weights are proportional to continuous responsiveness. Even if one interprets $p_i(z')\geq p_i(z)$ as a relaxation of the conventional monotonicity condition towards a stochastic monotonicity one and classifies those for whom $p_i(z')> p_i(z)$ as ``stochastic compliers,'' the estimand corresponds to a weighted average among those individuals with possibly unequal weights that are proportional to individual responsiveness $r_i(z',z)$.  
If $p_i(z')-p_i(z)$ takes both signs, the estimand is not an average. It is a signed contrast.

Following the same steps, one can also obtain stochastic analogues of the
``kappa weights'' introduced by \cite{abadie2003} (see his Theorem~3.1),
which in the deterministic potential-outcome model provide ``a weighting
scheme that allows us to identify expectations for compliers'' (p.~237).
For this discussion, relabel the two instrument states as
\(Z_i\in\{0,1\}\), define
\[
        \pi(X_i)=P(Z_i=1\mid X_i),
        \qquad
        r_i=p_i(1)-p_i(0),
\]
and suppose \(0<\pi(X_i)<1\) almost surely. Maintain the conditional version
of Assumption~\ref{ass:stateindependence} and
Assumption~\ref{exclusion}.

To obtain weighting identities for arbitrary transformations of the outcome,
we require a distributional strengthening of
Assumption~\ref{ass:separability}. Specifically, for every measurable set
\(A\), state \(z\), and treatment value \(d\) such that
\(p_i(d\mid z)>0\), suppose
\[
\Pr\{Y_i(d,z)\in A\mid G_i,D_i(z)=d\}
=
\Pr\{Y_i(d,z)\in A\mid G_i\}
=
Q_i(A\mid d,z).
\]
This condition is stronger than Assumption~\ref{ass:separability}, which
restricts only conditional means, but it is needed here only because the
weighting identities below are stated for arbitrary measurable functions of
the outcome. The residual-shock independence condition in
footnote~\ref{assumption1} is sufficient. For the special case
\(g(Y,X)=Y\) used below, the original mean-separability assumption is
sufficient.

Define
\[
\kappa_{1i}
=
D_i\,\frac{Z_i-\pi(X_i)}
{\pi(X_i)\{1-\pi(X_i)\}},
\qquad
\kappa_{0i}
=
(1-D_i)\,
\frac{\pi(X_i)-Z_i}
{\pi(X_i)\{1-\pi(X_i)\}},
\]
and
\[
\kappa_i
=
\kappa_{0i}\{1-\pi(X_i)\}
+
\kappa_{1i}\pi(X_i)
=
1-\frac{D_i(1-Z_i)}{1-\pi(X_i)}
-\frac{(1-D_i)Z_i}{\pi(X_i)}.
\]

For any measurable \(h\) satisfying
\(E|h(Y_i,X_i)|<\infty\), define the type-level expectations
\[
\bar h_{i,d}(X_i)
=
\int h(y,X_i)\,Q_i(dy\mid d),
\qquad d\in\{0,1\}.
\]
Then
\begin{align*}
E[\kappa_{1i}h(Y_i,X_i)]
&=
E\!\left[r_i\,\bar h_{i,1}(X_i)\right],\\
E[\kappa_{0i}h(Y_i,X_i)]
&=
E\!\left[r_i\,\bar h_{i,0}(X_i)\right].
\end{align*}
More generally, for an integrable measurable function \(g(y,d,x)\), define
\[
\bar g_{i,d}(X_i)
=
\int g(y,d,X_i)\,Q_i(dy\mid d).
\]
Using
\(
\kappa_i
=
\{1-\pi(X_i)\}\kappa_{0i}
+
\pi(X_i)\kappa_{1i},
\)
we obtain
\[
E[\kappa_i g(Y_i,D_i,X_i)]
=
E\!\left[
r_i
\left\{
\pi(X_i)\bar g_{i,1}(X_i)
+
\{1-\pi(X_i)\}\bar g_{i,0}(X_i)
\right\}
\right].
\]
In particular, taking \(g\equiv1\) gives
\[
E[\kappa_i]=E[r_i].
\]

These identities follow directly from conditional state independence. For
example,
\[
E[\kappa_{1i}h(Y_i,X_i)\mid G_i]
=
\{p_i(1)-p_i(0)\}\bar h_{i,1}(X_i),
\]
because, conditional on \(G_i\),
\(P(Z_i=1\mid G_i)=\pi(X_i)\), while the strengthened
within-type separability condition allows the observed treated outcome
distribution to be replaced by \(Q_i(\cdot\mid1)\). The identity for
\(\kappa_{0i}\) is analogous.

The deterministic potential-outcome model is nested as a special case. If
\(p_i(z)=D_i(z)\in\{0,1\}\), deterministic monotonicity holds,
\(D_i(1)\geq D_i(0)\), and the outcome kernels are degenerate at the
potential outcomes \(Y_i(d)\), then
\[
r_i
=
D_i(1)-D_i(0)
=
\mathbf 1\{D_i(1)>D_i(0)\},
\]
and
\[
\bar h_{i,d}(X_i)=h(Y_i(d),X_i).
\]
The normalized identities above then reduce to the familiar Abadie
weighting formulas for compliers.

In the stochastic model the same observable weights instead recover
responsiveness-weighted type-level expectations. In particular, if
\(E[r_i]\neq0\), setting \(h(Y,X)=Y\) gives
\[
\frac{E[\kappa_{1i}Y_i]}{E[\kappa_i]}
-
\frac{E[\kappa_{0i}Y_i]}{E[\kappa_i]}
=
\frac{E[r_i\{\mu_i(1)-\mu_i(0)\}]}{E[r_i]}
=
E\!\left[
\Delta_i
\frac{p_i(1)-p_i(0)}
     {E[p_i(1)-p_i(0)]}
\right].
\]
Thus the kappa weighting construction continues to use functions of the
observable data, but under stochastic treatment choice its latent
interpretation changes: rather than selecting a discrete complier
subpopulation, it weights type-level outcome distributions in proportion to
the treatment-probability movement induced by the instrument.

Recently, \cite{navjeenanetal2026} show, in a deterministic potential outcomes setting, that the existence of a version of such kappa weights is not only sufficient but also necessary for the identification of certain causal parameters in more general models.  We conjecture that the results there can be generalized to the stochastic potential treatments and outcomes framework as well.

\subsection{Multivalued Instruments, 2SLS and Stochastic Potential Outcomes}
\label{subsec:tsls}

The binary Wald estimand is easy to interpret because the instrument defines a single
comparison: state \(z'\) versus state \(z\). Most empirical IV applications are not quite
so simple. The instrument may be multivalued or continuous, the researcher may use
several excluded instruments, and the specification may include covariates. We consider here the well known two stage least squares estimand and interpret it in light of the stochastic potential outcome formulation we are considering. 

Consider the linear IV specification with one endogenous treatment,
\begin{equation}\label{eq:2sls}
Y_i = \alpha + \beta D_i + X_i^\top\gamma + u_i,    
\end{equation}
estimated using excluded instruments constructed from \(Z_i\). 
Equation~(\ref{eq:2sls}) is the estimating equation that defines the 2SLS
procedure, not a generative model for the outcome: the residual $u_i$ is defined by the
associated projection,\footnote{In other words, $u_i$ is such that $E(u_i)=0$, $E(u_i X_i)=0$ and $E(u_i L(D_i|B_i,X_i))=0$, where $L(D_i|B_i,X_i)$ is the linear projection of $D_i$ on $X_i$ and $B_i$ defined in what follows.} no restriction such as $\E[u_i\mid X_i,Z_i]=0$ is
imposed, and $\beta$ is not assumed to be a common structural coefficient.
We are interested in interpreting the  population 2SLS functional (defined more precisely below)
$\beta^{2SLS}=\E[D_i^{\mathrm{fs}}Y_i]/\E[D_i^{\mathrm{fs}}D_i]$ under
Assumptions~\ref{ass:separability}--\ref{exclusion}, where $D_i^{\mathrm{fs}}$ is defined below.

Let
\[
B_i=b(Z_i,X_i)
\]
denote the vector of excluded instrument terms used in the first stage. For example,
\(B_i\) may contain \(Z_i\) itself, dummies for values of a multivalued instrument,
polynomials in a continuous instrument, or interactions between the instrument and
covariates.

For any random variable \(V_i\), let

\[
\ddot V_i = V_i - L[V_i\mid X_i]
\]
denote the residual obtained after removing its linear projection on $X_i$, which we denote by $L[V_i\mid X_i]$ (here we require that all projections include a constant term).
In particular,

\[
\ddot B_i = B_i - L[B_i\mid X_i],
\qquad
\ddot D_i = D_i - L[D_i\mid X_i],
\quad \textrm{and} \quad
\ddot Y_i = Y_i - L[Y_i\mid X_i].
\]

The (residualized) population first stage is
\[
\ddot D_i = \ddot B_i^\top\pi_D + v_i,
\qquad
\pi_D
=
E[\ddot B_i\ddot B_i^\top]^{-1}E[\ddot B_i\ddot D_i],
\]
assuming the usual rank condition. Define
\[
D_i^{\mathrm{fs}}=\ddot B_i^\top \pi_D,
\]
which is the part of treatment receipt predicted by the excluded instruments after
partialling out the included controls. This is the population analogue of the usual
Frisch--Waugh--Lovell residualization.  With one endogenous treatment, even an
overidentified 2SLS estimator can be viewed as using this single fitted first-stage
component as the relevant IV variation. The population 2SLS coefficient is therefore
\[
\beta^{2SLS}
=
\frac{E[D_i^{\mathrm{fs}}\ddot Y_i]}
     {E[D_i^{\mathrm{fs}}\ddot D_i]}
=
\frac{E[D_i^{\mathrm{fs}}Y_i]}
     {E[D_i^{\mathrm{fs}}D_i]},
\]
where the second equality uses \(E[D_i^{\mathrm{fs}} L(D_i | X_i)]=0\) and \(E[D_i^{\mathrm{fs}} L(Y_i | X_i)]=0\) since, by definition, $E(\ddot V_i X_i)=0$ for any variable $V_i$.

Now impose the stochastic potential outcomes Assumptions \ref{ass:separability}-\ref{exclusion} from the previous section: no within-type
selection, (conditional) state independence, and exclusion. Then\footnote{As a reminder, and as stated in Section 2 above, $X_i$ is measurable with respect to $G_i$ (covariates are part of the type), so conditioning on $X_i$ and $G_i$ is the same as conditioning on $G_i$ alone.}
\[
E[Y_i\mid G_i,X_i,Z_i]
=
\mu_i(0)+p_i(Z_i)\Delta_i,
\quad \textrm{and} \quad
E[D_i\mid G_i,X_i,Z_i]
=
p_i(Z_i),
\]
where again the individual mean treatment effect $\Delta_i$ is
\[
\Delta_i=\mu_i(1)-\mu_i(0).
\]
Assume then that 
\[
L[B_i\mid X_i] = E[B_i \mid X_i].
\]
In other words: the conditional expectation function of $B_i$ given $X_i$ is linear.  This simplifies the expressions that follow and is reasonable if $X_i$ is discrete or a sufficiently rich function of observed covariates, for example. This implies that $E[\ddot B_i \mid X_i]=0$ and thus
\[
E[D_i^{\mathrm{fs}}\mid X_i]=0.
\]
We then have that
\begin{align*}
E[D_i^{\mathrm{fs}} \mu_i(0)] & = E\{E[D_i^{\mathrm{fs}} \mu_i(0) \mid X_i,G_i]\} = E\{E[D_i^{\mathrm{fs}} \mid X_i,G_i] \mu_i(0) \} = E\{E[D_i^{\mathrm{fs}} \mid X_i] \mu_i(0) \} = 0,
\end{align*}
and the baseline potential outcome term $E[D_i^{\mathrm{fs}} \mu_i(0)]$ drops out of the numerator in $\beta^{2SLS}$ above.  The first equality above uses the Law of Iterated Expectations. The second equality obtains from $Q_i(\cdot\mid 0,z)=Q_i(\cdot\mid 0), \forall z \in \cZ$ (Assumption \ref{exclusion}) and thus $\mu_i(0)=\int y\,Q_i(\dd y\mid 0)$ is a direct function of $G_i$.  Finally, the third equality relies on \(Z_i\Perp G_i\mid X_i\) by Assumption \ref{ass:stateindependence} and the last equality holds given that linear conditional expectation assumption above.  Hence:
\[
E[D_i^{\mathrm{fs}}Y_i]
=
E\!\left[D_i^{\mathrm{fs}}\{\mu_i(0)+p_i(Z_i)\Delta_i\}\right]
=
E\!\left[\Delta_i
E\{D_i^{\mathrm{fs}}p_i(Z_i)\mid G_i,X_i\}\right].
\]
Similarly,
\[
E[D_i^{\mathrm{fs}}D_i]
=
E\!\left[E\{D_i^{\mathrm{fs}}p_i(Z_i)\mid G_i,X_i\}\right].
\]

Define
\[
\kappa_i^{2SLS}
\equiv
E\{D_i^{\mathrm{fs}}p_i(Z_i)\mid G_i,X_i\}.
\]
This is the individual's contribution to the population first stage generated by the
particular 2SLS specification. To interpret it, hold fixed the individual's type and
covariates ($G_i$ and $X_i$), let the instrument vary according to its conditional distribution, and ask
how the person's treatment probability covaries with the first-stage fitted value used
by 2SLS.

The 2SLS estimand is therefore
\[
\beta^{2SLS}
=
\frac{E[\Delta_i\kappa_i^{2SLS}]}
     {E[\kappa_i^{2SLS}]}.
\]
Equivalently, writing
$
W_i^{2SLS}
=
\frac{\kappa_i^{2SLS}}{E[\kappa_i^{2SLS}]}
$,
we have
\[
\beta^{2SLS}=E[\Delta_i W_i^{2SLS}].
\]

This expression nests the binary Wald case. Suppose there are no covariates and
\(Z_i\in\{0,1\}\). If the first stage uses \(B_i=Z_i\), then
\(D_i^{\mathrm{fs}}\) is proportional to \(Z_i-E[Z_i]\). Thus
\[
\kappa_i^{2SLS}
\propto
\Pr(Z_i=1)\Pr(Z_i=0)\{p_i(1)-p_i(0)\}.
\]
The factor \(\Pr(Z_i=1)\Pr(Z_i=0)\) is common across individuals, so the normalized
2SLS weights are proportional to
\[
p_i(1)-p_i(0),
\]
exactly as in the binary IV estimand in Section 4.2.

With covariates, the same binary-instrument specification gives
\[
\kappa_i^{2SLS}
\propto
\Pr(Z_i=1\mid X_i)\Pr(Z_i=0\mid X_i)\{p_i(1)-p_i(0)\}.
\]
Thus covariate-adjusted 2SLS weights individuals not only by how much the instrument
changes their treatment probability, but also by how much residual instrument variation
is available in their covariate cell.

With a multivalued instrument, there is generally no single pairwise comparison. If
\(Z_i\) is discrete (but not necessarily binary), then
\[
\kappa_i^{2SLS}
=
\sum_{z\in\mathcal Z}
D^{\mathrm{fs}}(z,X_i)\Pr(Z_i=z\mid X_i)p_i(z),
\]
where \(D^{\mathrm{fs}}(z,x)\) is the first-stage fitted component when the instrument
state is \(z\) and covariates are \(x\). Since
\[
\sum_{z\in\mathcal Z}
D^{\mathrm{fs}}(z,X_i)\Pr(Z_i=z\mid X_i)=0,
\]
for any baseline state \(z_0\),
\[
\kappa_i^{2SLS}
=
\sum_{z\in\mathcal Z}
D^{\mathrm{fs}}(z,X_i)\Pr(Z_i=z\mid X_i)
\{p_i(z)-p_i(z_0)\}.
\]
Therefore, 2SLS with a multivalued instrument is a weighted combination of the
treatment-probability movements generated by the different instrument states. Using the
instrument in levels, using a set of dummies, or interacting the instrument with
covariates generally changes \(D_i^{\mathrm{fs}}\), and therefore changes the weights in
the treatment-effect average.

The main implication is that, beyond the binary Wald case, instrument validity does not
by itself determine the treatment-effect estimand\footnote{In the deterministic-compliance benchmark, \cite{sloczynski2020should} makes a closely related point: once covariates are needed for identification, the standard non-interacted linear IV specification can place negative weights on conditional LATEs — and lose its causal interpretation — while the saturated, fully interacted specification of \cite{angristimbens1995} restores convex weights (see also \cite{kolesar2013} and \cite{blandholetal2026}). Our $\kappa_i^{2SLS}$ is the stochastic potential outcomes counterpart of the object he studies, and moving $B_i$ from levels to a saturated set of interactions is precisely the lever that alters its sign.}. The estimand also depends on the
first-stage specification: which functions of the instrument are used, how covariates are
partialled out, and how multiple instruments are combined by 2SLS. This is the
stochastic-choice analogue of the familiar LATE lesson that IV estimates are local to the
margin moved by the instrument. Here the margin is not necessarily a discrete complier
group even when there is (stochastic) monotonicity. It is the distribution of individual treatment-probability movements induced by
the fitted first stage.

Consequently, two valid 2SLS specifications can identify different parameters. If
specifications \(a\) and \(b\) induce weights \(W_i^a\) and \(W_i^b\), then
\[
\beta^a-\beta^b
=
E[\Delta_i\{W_i^a-W_i^b\}]
=
\operatorname{Cov}\!\left(\Delta_i,W_i^a-W_i^b\right),
\]
because \(E[W_i^a]=E[W_i^b]=1\). As in the deterministic case, differences across valid IV estimates may therefore
reflect different first-stage margins rather than failure of exclusion or independence.

\section{Monotonicity and the Average Interpretation}\label{sec:monotonicity}

The Wald ratio is a ratio of two causal effects of the state: the reduced form and the
first stage. Under exclusion, this ratio can be written as
\[
\theta(z',z)
=
\frac{E[\Delta_i r_i(z',z)]}{E[r_i(z',z)]},
\qquad
r_i(z',z)=p_i(z')-p_i(z),
\]
where \(r_i(z',z)\) is the change in individual \(i\)'s probability of treatment induced by
the state contrast.

The econometric question is not only whether this ratio is well defined. It is whether
the ratio can be interpreted as an average of treatment effects. In the deterministic
LATE model, monotonicity guarantees that. It rules out defiers, makes the first
stage equal to the probability of switchers, and turns the Wald estimand into the average
treatment effect for compliers. In the stochastic model, the switching indicator ($\mathbf{1}\{D_i(z')>D_i(z)\}$) is
replaced by the probability change \(r_i(z',z)\). Hence the analogue of monotonicity is
one-sided responsiveness:
\[
r_i(z',z)\geq 0 \quad \text{for all } i,
\]
or, after relabeling the two states, \(r_i(z',z)\leq 0\) for all \(i\).

If responsiveness is one-sided and the first stage is non-zero, then
\[
\theta(z',z)
=
E\!\left[
\Delta_i
\frac{r_i(z',z)}{E[r_i(z',z)]}
\right],
\]
with nonnegative weights ($r_i(z',z)/E[r_i(z',z)]$) that integrate to one. The IV ratio is therefore a convex
responsiveness-weighted average of individual mean treatment effects. Individuals whose
treatment probabilities move more receive more weight. Individuals whose treatment
probabilities do not move receive zero weight.

This nests the standard cases. If
\[
r_i(z',z)=c\neq 0
\]
for all \(i\), then the ratio equals the ATE. When treatment choice is deterministic and
monotonic, then
\[
r_i(z',z)=D_i(z')-D_i(z)
=
\mathbf{1}\{D_i(z')>D_i(z)\},
\]
and the ratio becomes the usual LATE:
\[
\theta(z',z)
=
E[\Delta_i\mid D_i(z')>D_i(z)].
\]
For stochastic, but one-sided, treatment choice, the same logic survives, except that the complier indicator is replaced by the amount of probabilistic responsiveness.

The difficulty arises when responsiveness changes sign. Suppose, for simplicity, that
\(E[r_i(z',z)]>0\). Define
\[
A^+
=
E\!\left[r_i(z',z)\mathbf{1}\{r_i(z',z)>0\}\right],
\qquad
A^-
=
E\!\left[-r_i(z',z)\mathbf{1}\{r_i(z',z)<0\}\right],
\]
\[
\theta^+
=
\frac{
E\!\left[\Delta_i r_i(z',z)\mathbf{1}\{r_i(z',z)>0\}\right]
}{
E\!\left[r_i(z',z)\mathbf{1}\{r_i(z',z)>0\}\right]
},
\quad \textrm{and} \quad
\theta^-
=
\frac{
E\!\left[\Delta_i\{-r_i(z',z)\}\mathbf{1}\{r_i(z',z)<0\}\right]
}{
E\!\left[\{-r_i(z',z)\}\mathbf{1}\{r_i(z',z)<0\}\right]
}.
\]
Then
\[
\theta(z',z)
=
\frac{A^+\theta^+-A^-\theta^-}{A^+-A^-}.
\]
Thus the Wald ratio compares treatment effects among types whose treatment probabilities
rise with treatment effects among types whose treatment probabilities fall. The denominator
is the net first stage, \(A^+-A^-(= E[r_i(z',z)])\), not the total amount of treatment-probability (absolute value) movement,
\(A^++A^- (= E[|r_i(z',z)|])\). Importantly, when positive and negative movements partly offset, the ratio can lie outside the range of individual treatment effects.

\begin{proposition}[Average interpretation and one-sided responsiveness]
\label{prop:onesided}
Suppose individual mean treatment effects lie in a nondegenerate bounded
interval $[\underline{\Delta}, \overline{\Delta}]$, with
$-\infty < \underline{\Delta} < \overline{\Delta} < \infty$, and suppose
$E[r_i(z',z)] \neq 0$. The ratio
\[
\theta(z',z) \;=\; \frac{E[\Delta_i\, r_i(z',z)]}{E[r_i(z',z)]}
\]
is guaranteed to lie in $[\underline{\Delta}, \overline{\Delta}]$ for every
pattern of treatment-effect heterogeneity with
$\Delta_i \in [\underline{\Delta}, \overline{\Delta}]$ if and only if
$r_i(z',z)$ has one sign almost surely.
\end{proposition}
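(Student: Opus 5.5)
The plan is to prove the two directions separately, writing $r_i$ for $r_i(z',z)$ throughout.

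\emph{Sufficiency.} Suppose $r_i \ge 0$ almost surely; the case $r_i \le 0$ follows by multiplying numerator and denominator by $-1$. Since $E[r_i] \neq 0$, we have $E[r_i] > 0$. The weights $W_i = r_i/E[r_i]$ are then nonnegative and satisfy $E[W_i] = 1$. For any admissible heterogeneity pattern with $\underline\Delta \le \Delta_i \le \overline\Delta$, multiply these inequalities by $W_i \ge 0$ and take expectations. This gives $\underline\Delta \le E[\Delta_i W_i] = \theta \le \overline\Delta$. The step is routine and simply formalizes the convex-average discussion preceding the proposition.

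\emph{Necessity.} I argue by contraposition. Suppose $r_i$ takes both signs with positive probability, so $A^+>0$ and $A^->0$. By relabeling the states if necessary, assume $E[r_i] = A^+ - A^- > 0$. I then construct an admissible heterogeneity pattern that drives $\theta$ outside the interval. Take the extreme assignment $\Delta_i = \overline\Delta$ on $\{r_i>0\}$ and $\Delta_i = \underline\Delta$ on $\{r_i<0\}$, with any value in the interval on $\{r_i=0\}$. Since $\Delta_i$ is a measurable function of $r_i$, it is a legitimate function of the type $G_i$. Using the decomposition $\theta = (A^+\theta^+ - A^-\theta^-)/(A^+-A^-)$ derived just before the proposition, with $\theta^+=\overline\Delta$ and $\theta^-=\underline\Delta$, we get
\[
\theta - \overline\Delta = \frac{A^+\overline\Delta - A^-\underline\Delta - (A^+-A^-)\overline\Delta}{A^+-A^-} = \frac{A^-(\overline\Delta-\underline\Delta)}{A^+-A^-} > 0 .
\]
The inequality is strict because $A^->0$, $A^+-A^->0$, and the interval is nondegenerate. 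So $\theta > \overline\Delta$, which contradicts the guarantee. The case $E[r_i]<0$ is symmetric: the same assignment then yields $\theta < \underline\Delta$. Equivalently, one can swap the roles of the extremes.

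\emph{Main obstacle.} The delicate point is interpreting ``for every pattern of treatment-effect heterogeneity''. The construction must be compatible with the model: the joint law of $(\Delta_i, r_i)$ can be varied freely, with $\Delta_i$ chosen as a function of the type while the kernels $p_i$ stay fixed, and such a choice does not affect the observables used in $E[r_i]$. I will state this explicitly. I will also note that ``one sign almost surely'' allows $r_i=0$ on a set of positive probability, and that the condition $E[r_i]\ne0$ is what makes the sign relabeling legitimate.
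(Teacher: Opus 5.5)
Your proof is correct and follows the paper's argument. Sufficiency comes from nonnegative weights that average to one, and necessity comes from assigning extreme values of $\Delta_i$ according to the sign of $r_i$ and evaluating $\theta$ through the $A^+/A^-$ decomposition; the only difference is that you place $\overline{\Delta}$ on positive responders (so $\theta>\overline{\Delta}$ when $E[r_i]>0$), whereas the paper places $\underline{\Delta}$ there (so $\theta<\underline{\Delta}$), and this mirror image makes no difference.
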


\begin{proof}
($\Rightarrow$) If $r_i(z',z) \geq 0$ almost surely and $E[r_i(z',z)] > 0$,
the normalized weights $r_i(z',z)/E[r_i(z',z)]$ are nonnegative and integrate
to one, so the ratio is a convex average of values in
$[\underline{\Delta}, \overline{\Delta}]$ and therefore lies in that interval.
The case $r_i(z',z) \leq 0$ almost surely is identical after multiplying
numerator and denominator by $-1$.

($\Leftarrow$) Suppose instead that $r_i(z',z)$ takes both positive and
negative values with positive probability. Recall
\[
A^+ = E\big[r_i(z',z)\,\mathbf{1}\{r_i(z',z) > 0\}\big],
\qquad
A^- = E\big[-r_i(z',z)\,\mathbf{1}\{r_i(z',z) < 0\}\big],
\]
so that $A^+ > 0$, $A^- > 0$, and $E[r_i(z',z)] = A^+ - A^-$. Consider a population where
\[
\Delta_i =
\begin{cases}
\underline{\Delta} & \text{if } r_i(z',z) > 0,\\[2pt]
\overline{\Delta} & \text{if } r_i(z',z) \leq 0,
\end{cases}
\]
which satisfies $\Delta_i \in [\underline{\Delta}, \overline{\Delta}]$. Then
$E[\Delta_i\, r_i(z',z)] = \underline{\Delta}A^+ - \overline{\Delta}A^-$, and
direct computation gives the two identities
\[
\theta(z',z) - \underline{\Delta}
= -\,\frac{(\overline{\Delta}-\underline{\Delta})\,A^-}{E[r_i(z',z)]},
\qquad
\theta(z',z) - \overline{\Delta}
= -\,\frac{(\overline{\Delta}-\underline{\Delta})\,A^+}{E[r_i(z',z)]}.
\]
Since $\overline{\Delta} - \underline{\Delta} > 0$ and $A^+, A^- > 0$, the
first identity implies $\theta(z',z) < \underline{\Delta}$ when
$E[r_i(z',z)] > 0$, and the second implies
$\theta(z',z) > \overline{\Delta}$ when $E[r_i(z',z)] < 0$. In either case
the ratio lies strictly outside $[\underline{\Delta}, \overline{\Delta}]$.
Thus, when responsiveness changes sign, the ratio cannot be guaranteed to
have a convex-average interpretation.
\end{proof}

The proposition clarifies the role of monotonicity. In the deterministic IV model,
monotonicity rules out defiers. In the stochastic model, one-sided responsiveness rules
out types whose treatment probabilities move in the opposite direction. The result above is also noted in \cite{dinardolee2011}.  Without this
condition, the reduced form remains a causal effect of the assigned state, but the Wald
ratio is a signed treatment-effect contrast rather than an average treatment effect.

The same point applies to more general IV estimands. For two-stage least squares with
covariates or multiple excluded instruments, the relevant object is the individual first-stage
contribution induced by the fitted first stage. The 2SLS estimand has an average-treatment-effect
interpretation only when those induced first-stage contributions are one-sided. Otherwise,
valid instruments may still identify signed contrasts rather than convex averages.

\section{An Information Foundation for Stochastic Choice}
\label{sec:information}

The preceding sections develop a stochastic potential choice and outcome
model and interpret common estimands without specifying why treatment choice
remains stochastic even within individual types. As noticed elsewhere, experimental evidence suggests that individual choices may vary even when confronted with the same decision scenario (see, e.g., \cite{strzalecki2025}, p.9). This may reflect transient frictions and fluctuations in perceived utility, learning and experimentation, random consideration or attention, genuine implementation or decision errors or even deliberate randomization (see again \cite{strzalecki2025}, pp.9-10). Given this, ``[i]t will be
realistic to assume that a man's choice from a given set of alternatives is
not unique but obeys some probability distribution''
\citep[p.~312]{marschak1960}.

Formally, we model this by postulating an information set (or sigma-field) $\mathcal{F}_i$ for individual $i$ where \(M_i(z)\) denotes the systematic component of individual
\(i\)'s perceived private net value of treatment under state \(z\), and
\(\sigma_i(z)>0\) measures the dispersion of the part of the perception error
that remains unresolved at the moment of choice under that state. Whereas both \(M_i(z)\) and \(\sigma_i(z)>0\) pertain to the information set $\mathcal{F}_i$, residual variation ascribed to a perception error $\varepsilon_i$ does not.  The state $z$ here
may represent policy, default rule, price, subsidy, information
treatment, or another feature of the choice environment. The 
treatment margin determining choices is then given by the random index
\begin{equation}
\widetilde M_i(z)
=
M_i(z)+\sigma_i(z)\,\varepsilon_i,
\label{eq:perceived_margin}
\end{equation}
where \(\varepsilon_i\) is a perception error with a symmetric, continuous
distribution function \(F\), density \(f>0\), mean zero and unit variance,
drawn once conditional on \(G_i\) and, in line with the discussion following
Assumption~\ref{ass:stateindependence}, independent of \(Z_i\) given
\(G_i\).\footnote{The selection is related to that in \cite{klein2011}, who studies (small random) deviations (via $\varepsilon_i$) from (deterministic) monotonicity. The interpretation and purpose of the idiosyncratic term $\varepsilon_i$ is thus distinct.} Under this normalization
\(\sigma_i^2(z)=\Var\{\widetilde M_i(z)\mid G_i\}\). The functions
\(M_i(\cdot)\) and \(\sigma_i(\cdot)\), together with \(F\), are stable
objects and constitute a structural parameterization of the treatment choice
kernel in \(G_i\); the realization of \(\varepsilon_i\) is not part of the
type and is the residual randomness that makes choice stochastic (within
type). Here, a more informative state,
formalized as a smaller \(\sigma_i(z)\), means that a larger share of the index (\ref{eq:perceived_margin}) is ascribed to the deterministic component \(M_i(\cdot)\).\footnote{A multi-component analogue of the sequential
information structure in \citet[Section VI]{heckmannavarro} makes the
resolution structure explicit. Let \(\mathbf u_i=(u_{i1},\ldots,u_{iK})\) be
 mean-zero variables affecting the choice drawn given \(G_i\),
and let the state $z$ determine which components the individual learns before choosing.  Denote by \(J_i(z)\subseteq\{1,\ldots,K\}\) the set of indices for those variables in individual $i$'s information set.  The unresolved part of the choice index is
\(\varepsilon_i(z)=\sum_{k\notin J_i(z)}u_{ik}\). The scalar representation \eqref{eq:perceived_margin} is the
marginal, location--scale summary of such constructions and is all that is
used below: every estimand in
Sections~\ref{sec:observedcontrasts}--\ref{sec:monotonicity} depends on the
type only through the marginal kernels \(p_i(\cdot\mid z)\), so the joint
law of \(\{\widetilde M_i(z):z\in\cZ\}\) across states matters for
interpretation rather than identification (see also
Appendix~\ref{app:equivalence}).} Consistency with
Assumption~\ref{ass:separability} (no within-type selection on residual outcome shocks) requires that the 
 choice assessment error be conditionally mean-independent of 
residual outcome shocks:
\[
\E\{Y_i(d,z)\mid G_i,\varepsilon_i\}=\mu_i(d,z).
\]

If treatment choice or allocation follows a threshold-crossing rule with zero as the threshold (i.e., $D_i(z)=1$ if and only if $\widetilde M_i(z)\geq 0$), the treatment choice kernel is given by:
\begin{equation}
p_i(z)
=
\Pr\{\widetilde M_i(z)\geq 0\mid G_i\}
=
F\big(t_i(z)\big)
\qquad \textrm{with} \qquad
t_i(z)\equiv\frac{M_i(z)}{\sigma_i(z)},
\label{eq:choice_curve}
\end{equation}
where we assume that \(F\) is continuous and corresponds to a symmetric distribution (around zero).
We refer to the map \(t\mapsto F(t)\) as the individual \emph{choice curve}
and to \(t_i(z)\) as the \emph{standardized perceived margin}. Everything
the individual contributes to the estimands of the previous sections is her
position on this curve under each state. Before the perception error is
realized, \(p_i(z)\) is also the individual's own ex ante choice
probability and corresponds to the object elicited in stated-probability data
\citep{blasslachmanski2010,arcidiaconoetal2020,briggsetal2024}.
Deterministic compliance is nested as a limit: as \(\sigma_i(z)\to0\) and if \(\Pr(M_i(z)=0)=0\), the
choice curve steepens to a step function, \(p_i(z)\to\1\{M_i(z)\ge 0\}\), and
the deterministic taxonomy of Section~\ref{sec:familiar}
re-emerges.\footnote{More generally, one may let the unresolved error have
an arbitrary state-dependent conditional law \(F_{i,z}\) given \(G_i\), so
that \(p_i(z)=1-F_{i,z}(-M_i(z))\) and, for a continuously indexed scalar
state,
\(\dot p_i(z)=f_{i,z}(-M_i(z))\,\dot M_i(z)
-\partial_z F_{i,z}(u)\big|_{u=-M_i(z)}\).
The location--scale family is the only case used in this section; it
already spans the two channels through which a state can act on choice,
by shifting the location of the perceived margin or by changing its
scale.}

Suppose now that \(z\) is a continuously indexed scalar state and that
\(M_i(z)\) and \(\sigma_i(z)\) are differentiable in \(z\). Differentiating
\eqref{eq:choice_curve} gives a measure of local responsiveness:
\begin{equation}
\dot p_i(z)
=
\frac{f\big(t_i(z)\big)}{\sigma_i(z)}
\Big\{\dot M_i(z)-t_i(z)\,\dot\sigma_i(z)\Big\}.
\label{eq:two_channels}
\end{equation}
Local responsiveness is a nonnegative factor times the difference between two terms. The
factor \(f(t_i(z))/\sigma_i(z)\) is the density of the treatment margin in (\ref{eq:perceived_margin}) at
zero.  This term is large when the individual is close to
indifference and the distribution $F$ is not only symmetric, but also single-peaked at zero. The first term, \(\dot M_i(z)\), is a \emph{location}
channel: the state shifts the systematic, deterministic component of the treatment margin,
moving the individual along her choice curve. The second one,
\(-t_i(z)\,\dot\sigma_i(z)\), is a \emph{precision} channel: the state
changes how much of the perception error contributes to the treatment allocation, steepening or
flattening the curve. Because \(p_i(z)\) depends on
\((M_i(z),\sigma_i(z))\) only through the ratio \(t_i(z)\), the two channels
are two ways of moving the standardized margin: a location shift moves it
additively, while a precision change rescales it multiplicatively. This
distinction governs monotonicity below. Figure~\ref{fig:choice_curve} illustrates the two channels.

\begin{figure}[t]
\centering
\pgfplotsset{choicecurve/.style={width=0.46\textwidth, height=0.36\textwidth,
    xmin=-4.5, xmax=4.5, ymin=-0.02, ymax=1.02,
    xtick={0}, ytick={0,0.5,1}, axis lines=left,
    xlabel={$M_i$}, ylabel={$p$}, samples=120, domain=-4.5:4.5}}
\begin{tikzpicture}
\begin{axis}[choicecurve, title={Location channel}]
  \addplot[thick]         {1/(1+exp(-x))};          
  \addplot[thick, dashed] {1/(1+exp(-(x+1.5)))};    
\end{axis}
\end{tikzpicture}\hfill
\begin{tikzpicture}
\begin{axis}[choicecurve, title={Precision channel}]
  \addplot[thick]         {1/(1+exp(-x))};          
  \addplot[thick, dashed] {1/(1+exp(-x/0.45))};     
\end{axis}
\end{tikzpicture}
\caption{Two channels of a state acting on the choice probability
$p_i(z)=F\{M_i(z)/\sigma_i(z)\}$, plotted against the systematic margin $M_i$
(baseline scale held fixed). Solid: before the intervention; dashed: after.
Left: a mean-shifting state, $M_i \mapsto M_i+\ell_i\,\Delta z$ with $\sigma_i$
fixed, translates the curve horizontally, so responsiveness is one-sided.
Right: an information refinement, $\sigma_i' < \sigma_i$ with $M_i$ fixed,
steepens the curve, pivoting it around $p=1/2$ at $M_i=0$: take-up rises for
$M_i>0$ and falls for $M_i<0$, so responsiveness is two-sided.}
 \label{fig:choice_curve}
\end{figure}
 
For a small change in \(z\), given the Wald estimand in \eqref{eq:wald},
the induced IV-type local parameter is (assuming the relevant limits and
expectations may be interchanged):
\[
\theta(z)
=
\frac{\E[\Delta_i\,\dot p_i(z)]}{\E[\dot p_i(z)]},
\]
whenever the denominator is nonzero. Thus the local weights are proportional
to \(\dot p_i(z)\), the individual-level change in treatment probability
induced by the state. The two subsections that follow examine each channel
in isolation.

\subsection{Mean-Shifting States: Moving Along the Choice Curve}
\label{sec:location}

First consider a state that shifts the systematic perceived margin while
leaving its precision unchanged, so \(\dot\sigma_i(z)=0\). Suppose that
\[
M_i(z)=M_i+\ell_i z,
\qquad
\ell_i\geq 0.
\]
Then \eqref{eq:two_channels} reduces to
\[
\dot p_i(z)
=
\ell_i\,\frac{f\big(t_i(z)\big)}{\sigma_i(z)}
\;\geq\;0,
\]
and, provided \(\E[\ell_i f(t_i(z))/\sigma_i(z)]>0\), the local weights are
\begin{equation}
W_i^{M}(z)
=
\frac{\ell_i\,f\big(t_i(z)\big)/\sigma_i(z)}
     {\E\!\left[\ell_i\,f\big(t_i(z)\big)/\sigma_i(z)\right]}.
\label{eq:location_weight}
\end{equation}
The weight has two components. When the distribution is symmetric and single-peaked at zero, the margin-density term
\(f(t_i(z))/\sigma_i(z)\) is large when the individual is close to
indifference based on their known, perceived margin (i.e., $t_i(z) \approx 0$). The loading \(\ell_i\) measures how
strongly the state shifts that individual's perceived treatment margin.

For a subsidy or cost
instrument $z$, \(\ell_i\) could be interpreted as material pass-through. For a reminder or
salience intervention $z$, \(\ell_i\) could encode attention responsiveness. For a
persuasion or framing instrument $z$, \(\ell_i\) could correspond to susceptibility to the
message. The algebra is the same, but the economic interpretation of the
loading differs.

If \(f\) is single-peaked at zero, then for fixed \(\ell_i\) and
\(\sigma_i\) the unnormalized weight is largest when \(t_i(z)=0\), that is,
at \(p_i(z)=1/2\). Mean-shifting states therefore place the greatest weight
on individuals for whom the systematic component of the treatment margin is near the threshold. Note also
that sign-homogeneous loadings deliver one-sided responsiveness (or monotonicity) by
construction: a location shift translates every standardized margin in the
same direction, \(\dot t_i(z)=\ell_i/\sigma_i(z)\geq 0\), so
\(\dot p_i(z)\geq 0\) for all \(i\) and the local analogue of the
stochastic monotonicity condition of Section~\ref{sec:monotonicity} holds.

\paragraph{Relation to Marginal Treatment Effects (MTE).}
The density term \(f(t_i(z))/\sigma_i(z)\) in \eqref{eq:location_weight} is
the within-type analogue of the object that drives marginal treatment
effects in the latent-index model of \citet{HeckmanVytlacil2005}. There, the choice rule is given by
\(D_i(Z_i)=\1\{\mu_D(Z_i) \geq U_{Di}\}\), 
where $U_D$ can be interpreted a ``resistance to treatment'' term counteracting the state-dependent index \(\mu_D(Z_i)\).  In the Heckman--Vytlacil model the resistance
\(U_D\) is a \emph{fixed} individual scalar type (thus belonging to $\mathcal{F}_i$) and the index \(\mu_D(\cdot)\) is homogeneous
to the population: the instrument moves a single threshold faced by
everyone, the individuals at the margin form the fixed subpopulation
\(\{U_D=\mu_D(z)\}\) and monotonicity follows because the common threshold
moves in one direction for all. This is the configuration under
which the equivalence result established in \citet{vytlacil2002} holds, and it corresponds
here to a location channel with a common index and loading.  In terms of the presentation above, this means that $\sigma_i(z)=0$ for every $z$ (since both $\mu_D(z)$ and $U_D$ are in the agent's information set), $U_{Di}=M_i \in \mathcal{F}_i$ and $\ell_i z = \ell z = \mu_D(z) \in \mathcal{F}_i$ for every $i$.  In the stochastic
model the standardized margin is instead individual-specific: the same
state moves \(t_i(z)\), the standardized loadings
\(\ell_i/\sigma_i(z)\), by different amounts and possibly
in different directions. Moreover, the realization of \(\varepsilon_i\) is
not part of the type as is the realized $M_i \in \mathcal{F}_i$, so the marginal event at \(z\) is a within-type
probabilistic event carrying density \(f(t_i(z))/\sigma_i(z)\), rather than
a fixed subgroup indexed by an individual (realized) scalar resistance. Two consequences follow.
First, effect heterogeneity is indexed by the full type \(G_i\), not by a
single resistance value \(U_{Di}\), so there need not be a real function (i.e., the MTE given by $E(Y_1-Y_0|U_D=u)$) whose (weighted) average corresponds to conventional treatment effect parameters or estimands (e.g., ATE, OLS, IV); see
\citet{HeckmanUrzuaVytlacil2006}. Second, the precision term
\(-t_i(z)\,\dot\sigma_i(z)\) in \eqref{eq:two_channels} interacts the state
with the scale of the decision or perception error and therefore falls outside the
additively separable latent-index class. It is precisely this channel that
delivers the failures of monotonicity documented next.

\subsection{Information Refinements: Steepening the Choice Curve}
\label{sec:precision}

Another possibility is for states to change the precision of the treatment
margin rather than its mean. This can encode simplification,
clarification, disclosure, counseling, or decision aids. Such interventions
refine the information available at the moment of choice reducing the inherent stochasticity: they decrease the role ascribed to the perception error \(\varepsilon_i\), so
the dispersion of its unresolved component falls. Hold the systematic
margin fixed with respect to the state, i.e. \(M_i(z)=M_i\), let $z$ be continuous and suppose that
\[
\dot\sigma_i(z)<0.
\]
Then \eqref{eq:two_channels} reduces to
\begin{equation}
\dot p_i(z)
=
-\,\frac{t_i(z)\,f\big(t_i(z)\big)}{\sigma_i(z)}\,\dot\sigma_i(z)
=
\frac{f\big(t_i(z)\big)\,\lvert\dot\sigma_i(z)\rvert}{\sigma_i(z)}\;t_i(z),
\label{eq:precision_responsiveness}
\end{equation}
so the sign of \(\dot p_i(z)\) is the sign of \(M_i\). Equivalently, a
refinement rescales the standardized margin away from zero,
\(t_i(z')=t_i(z)\cdot\{\sigma_i(z)/\sigma_i(z')\}>t_i(z)\), with the factor $\sigma_i(z)/\sigma_i(z')$
exceeding one when $z<z'$: it pushes each individual further in the direction she
already leans. The choice curve steepens around its midpoint: take-up rises
for individuals whose systematic margin favors treatment
(\(M_i>0\), \(p_i(z)>1/2\)) and falls for individuals whose systematic
margin disfavors it (\(M_i<0\), \(p_i(z)<1/2\)).

\begin{proposition}[Information refinement and two-sided responsiveness]
\label{prop:refinement}
Suppose \eqref{eq:perceived_margin} holds with \(M_i(z)=M_i\), \(f>0\) on
its support, \(\sigma_i(z)>0\), and \(\dot\sigma_i(z)<0\) almost surely. If
\[
\Pr(M_i>0)>0
\qquad\text{and}\qquad
\Pr(M_i<0)>0,
\]
then
\[
\Pr\{\dot p_i(z)>0\}>0
\qquad\text{and}\qquad
\Pr\{\dot p_i(z)<0\}>0.
\]
Hence an information-refining state generically violates local stochastic
monotonicity.
\end{proposition}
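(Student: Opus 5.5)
The plan is to work directly from the local responsiveness formula \eqref{eq:two_channels}, specialised in \eqref{eq:precision_responsiveness} to the case \(\dot M_i(z)=0\). With \(M_i(z)=M_i\), that expression gives
\[
\dot p_i(z)=\frac{f\big(t_i(z)\big)\,\lvert\dot\sigma_i(z)\rvert}{\sigma_i(z)}\,t_i(z),
\qquad t_i(z)=\frac{M_i}{\sigma_i(z)}.
\]
So \(\dot p_i(z)\) is the product of a prefactor and \(t_i(z)\). The whole argument reduces to showing that the prefactor is strictly positive almost surely and that \(t_i(z)\) has the sign of \(M_i\).

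For the prefactor, \(f>0\) handles the density term. Since \(F\) is a continuous symmetric distribution with positive density, \(f(t_i(z))>0\) for every real \(t_i(z)\). I will read ``\(f>0\) on its support'' with the support being the whole real line, as in the earlier maintained assumption \(f>0\). Next, \(\sigma_i(z)>0\) and \(\dot\sigma_i(z)<0\) almost surely give \(\lvert\dot\sigma_i(z)\rvert/\sigma_i(z)>0\) almost surely. Because \(\sigma_i(z)>0\), \(\operatorname{sign}t_i(z)=\operatorname{sign}M_i\). Hence, outside a null set,
\[
\{\dot p_i(z)>0\}=\{M_i>0\},\qquad \{\dot p_i(z)<0\}=\{M_i<0\}.
\]

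Probabilities are unchanged by modifying events on null sets. The hypotheses \(\Pr(M_i>0)>0\) and \(\Pr(M_i<0)>0\) therefore transfer directly to \(\Pr\{\dot p_i(z)>0\}>0\) and \(\Pr\{\dot p_i(z)<0\}>0\). This is exactly two-sided responsiveness. Combined with Proposition~\ref{prop:onesided}, applied to the local ratio \(\theta(z)\), it means the local IV parameter is not guaranteed to be a convex average. This is the sense of ``violates local stochastic monotonicity.''

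The only delicate step is the almost-sure bookkeeping together with the interpretation of \(f>0\). If \(f\) had bounded support, \(f(t_i(z))\) could vanish for large \(|t_i(z)|\), and the sign conclusion would fail on that region. I will therefore state explicitly that the proof uses \(f>0\) everywhere, as in the setup of \eqref{eq:perceived_margin}. I will also check that \eqref{eq:two_channels} is a legitimate pointwise derivative, which holds since \(F\) is differentiable with density \(f\) and \(\sigma_i\) is differentiable in \(z\). Everything else is sign-chasing.
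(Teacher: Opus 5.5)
Your proposal is correct and matches the paper's proof: both read off from \eqref{eq:precision_responsiveness} that \(\dot p_i(z)\) is a strictly positive factor times \(t_i(z)\), whose sign is that of \(M_i\), and then carry \(\Pr(M_i>0)>0\) and \(\Pr(M_i<0)>0\) over to the two signs of \(\dot p_i(z)\). Your remark that the argument needs \(f(t_i(z))>0\) at every attained value, so \(f>0\) on all of \(\R\) rather than on a possibly bounded support, makes explicit a point the paper leaves implicit.
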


\begin{proof}
By \eqref{eq:precision_responsiveness}, \(\dot p_i(z)\) is the product of
the strictly positive factor
\(f(t_i(z))\lvert\dot\sigma_i(z)\rvert/\sigma_i(z)\) and \(t_i(z)\), whose
sign is the sign of \(M_i\). Since \(M_i>0\) and \(M_i<0\) each occur with
positive probability, \(\dot p_i(z)\) takes both signs with positive
probability.
\end{proof}

The failure of monotonicity is exactly what one should expect from better
information. If some individuals mistakenly opt out and others mistakenly
opt in, an information refinement induces entry on one side of the
threshold and exit on the other. Proposition~\ref{prop:refinement}
therefore does not describe a defect in the intervention. Two-sided
responsiveness is the economic content of a policy that reduces mistakes on
both sides of the choice threshold. In the language of
Section~\ref{sec:monotonicity}: stochastic monotonicity is the property
that a state translates every standardized margin in a common direction;
a state that instead rescales the margin generically violates it.

Without monotonicity, the Wald ratio need not be an average treatment
effect. The reduced form, however, remains directly meaningful. Under
exclusion,
\[
m_i(z)=\mu_i(0)+p_i(z)\Delta_i.
\]
Assuming differentiation and expectation can be interchanged,
Proposition~\ref{prop:state_itt} implies
\[
\frac{d}{dz}\E[m_i(z)]
=
\E[\Delta_i\,\dot p_i(z)].
\]
When \(Y_i\) is the outcome relevant to a planner, this derivative is the
effect of refining information on the planner's outcome. Substituting
\eqref{eq:precision_responsiveness} gives
\[
\frac{d}{dz}\E[m_i(z)]
=
\E\!\left[
\underbrace{\frac{f\big(t_i(z)\big)\,\lvert\dot\sigma_i(z)\rvert}
{\sigma_i^2(z)}}_{\geq 0}
\;\Delta_i M_i
\right].
\]
Hence the effect is nonnegative whenever \(\Delta_i M_i\geq 0\) almost
surely; more generally, its sign depends on the weighted average alignment
between the perceived treatment margin \(M_i\) and the true treatment gain
\(\Delta_i\).

As \(\sigma_i(z)\to0\), for every \(M_i\neq0\),
\[
p_i(z)\longrightarrow\1\{M_i>0\}.
\]
Thus, if \(\Pr(M_i=0)=0\), the choice curve converges to a step function
and choice conforms almost surely to the systematic perceived margin. An
information refinement pulls into treatment individuals with \(M_i>0\) and
pushes out individuals with \(M_i<0\): it makes behavior increasingly obey
perceptions. This raises the planner's outcome only to the extent that
perceptions are aligned with treatment gains. If \(M_i>0\) but
\(\Delta_i<0\) for some individuals, more precise information moves them
more confidently toward an action that lowers the planner's outcome. This
can occur because \(M_i\) is a biased perception of the individual's own
payoff, or because the individual's private payoff legitimately differs
from the measured outcome \(Y_i\), in which case the refinement is
privately optimal yet lowers \(Y_i\). Accordingly, the welfare value of a
decision aid depends on the substantive alignment between the margins it
sharpens and the gains relevant to the planner.

\section{Repeated Choices and the Empirical Content of Random Types}
\label{sec:panel}

The preceding sections consider a static cross section scenario, and Appendix~\ref{app:equivalence} shows that the (ex ante) stochastic representation and the (ex post) deterministic representation generate the same observable conditional means in that environment. If restrictions are imposed on how choices are related over time, repeated observations can nonetheless distinguish them. To demonstrate this, consider the stable choice-kernel model
\[
\mathcal M_S:\qquad
D_{it}(z)\mid G_i \ \stackrel{\mathrm{i.i.d.}}{\sim}\
\operatorname{Bernoulli}\{p_i(z)\},
\]
and regard the stable deterministic-response model as its degenerate case
\[
\mathcal M_D:\qquad p_i(z)\in\{0,1\}\quad\text{almost surely for every }z.
\]
Within $\mathcal M_S$, one can use two periods to test whether the choice kernel is degenerate; use short panels to identify moments of the individual treatment probabilities across individuals; and use long panels to identify their distribution. Combining choice paths with outcomes then identifies the joint dependence between treatment effects and the probability movements induced by an instrument or policy. This last object---the latent weighting operator--is the main panel contribution.

The scope of the comparison is important. The results below do not test determinism without restrictions on latent paths $(D_{i1}^{*}(z),\ldots,D_{iT}^{*}(z))$. For any finite $T$, one can draw a latent path $H_i\equiv(D_{i1}^{*}(z),\ldots,D_{iT}^{*}(z))$ from the same distribution as the stochastic choice path and then treat every component of $H_i$ as fixed. This deterministic-path representation has the same observable panel law.
The empirical content comes jointly from stability of the individual choice
kernel, the absence of structural state dependence, and serial independence of
the residual choice draws. Conditional on the stable type \(G_i\), each period
is therefore a new realization of the same static choice problem: the
probability of treatment in period \(t\) depends on the contemporaneous state
only through \(p_i(Z_{it})\), rather than on earlier choices or states. At a fixed state $z$, a finite panel identifies only the first $T$ moments of $p_i(z)$ across individuals; and its distribution is the unique mixing distribution only for an infinitely extendible sequence or in the long-panel limit \citep{diaconisfreedman1980}. We return to the cost of these restrictions in Section~\ref{subsec:panel-discussion}.

\subsection{Setup}
\label{subsec:panel-setup}

For each individual $i$ we now observe a path
\[
O_i \;=\; \{(Y_{it}, D_{it}, Z_{it})\}_{t=1}^{T},
\qquad \mathbf{Z}_i \equiv (Z_{i1},\ldots,Z_{iT}),
\]
with potential treatments $D_{it}(z)$ and potential outcomes $Y_{it}(d,z)$ defined per period. As before, observed variables satisfy $D_{it} = D_{it}(Z_{it})$ and $Y_{it} = Y_{it}(D_{it}, Z_{it})$.  When potential outcomes and treatments are deterministic, the objects
\(D_{it}(\cdot)\) and \(Y_{it}(\cdot)\) are individual-specific functions. In the
stochastic formulation, however, even after fixing the type \(G_i\) and the
state \(z\), treatment receipt may remain random.  Similarly, fixing $G_i$, $d$ and $z$, outcomes may remain random.

\begin{assumption}[Stable types and static resolution]
\label{ass:paneltypes}
There is a time-invariant type $G_i = (X_i, Q_i, p_i)$ such that:
\begin{itemize}
\item[(i)] (\emph{Stationary kernels}) For every $t$, $d$ and $z$,
$\Pr\{D_{it}(z) = 1 \mid G_i\} = p_i(z)$ and $\Pr\{Y_{it}(d,z) \in \cdot \mid G_i\} = Q_i(\cdot \mid d, z)$, with $\mu_i(d,z) = \int y \, Q_i(dy \mid d,z)$ as before (see Definition \ref{def:sttypes});
\item[(ii)] (\emph{Serial independence of residual choice draws}) Conditional on $(G_i, \mathbf{Z}_i)$, the draws $\{D_{it}(Z_{it})\}_{t=1}^{T}$ are independent across $t$ with $\Pr\{D_{it}(Z_{it}) = 1 \mid G_i, \mathbf{Z}_i\} = p_i(Z_{it})$;
\item[(iii)] (\emph{No within-type selection, panel version}) For every $t$, $d$ with $p_i(d \mid Z_{it}) > 0$,
\[
E\{Y_{it}(d, Z_{it}) \mid G_i, \mathbf{Z}_i, D_{i1},\ldots,D_{iT}\} \;=\; \mu_i(d, Z_{it}),
\]
\item[(iv)] (\emph{Strict exogeneity of the state path}) $\mathbf{Z}_i \perp\!\!\!\perp G_i$ (or conditionally on $X_i$, with the obvious modifications below), and, conditional on $G_i$, $\mathbf{Z}_i$ is independent of the residual draws in (ii)--(iii).
\end{itemize}
\end{assumption}

Assumption \ref{ass:paneltypes} is the panel-data version of the stable response type used
throughout the paper. Part (i) imposes stability of the treatment-choice and
outcome kernels over time. Part (ii) rules out structural state dependence and
serial correlation in the residual choice draws: conditional on \(G_i\), each
period represents a new draw from the same choice kernel evaluated at the
current state. Thus, for example, past treatment receipt does not itself change
the current treatment probability once \(G_i\) and the current state are held
fixed. Part (iii) extends Assumption 1 to the panel and rules out residual mean
selection through the realized treatment path. Part (iv) imposes strict
exogeneity of the state path. In particular, states are not assigned in
response to the individual's type or to past residual choice or outcome shocks.

These restrictions are stronger than a no-anticipation condition alone. The
setup is static in the sense that the potential treatment and outcome in a
period are indexed by the contemporaneous state, rather than by the full
history of states and choices. The results below therefore abstract from
learning, structural state dependence, feedback, and other dynamic channels.

\begin{assumption}[Path richness for the contrast $(z',z)$]
\label{ass:pathrich}
Every sequence $\zeta \in \{z, z'\}^{T}$ satisfies $\Pr(\mathbf{Z}_i = \zeta) > 0$.
\end{assumption}

Assumption~\ref{ass:pathrich} is stronger than needed for any single statement below (each result requires positive probability only for the particular sub-paths it uses), but it permits uniform statements. It is naturally satisfied by designs with re-randomized assignment, by rotating defaults or prices, and---approximately---by preference- or examiner-based instruments in which the same individual is repeatedly exposed to differing decision environments.

The deterministic benchmark used below is $\mathcal M_D$: $D_{it}(z)=D_i^{*}(z)$ for every $t$ and every visited state. This time invariance is a substantive restriction, not a consequence of determinism alone. It makes $\mathcal M_D$ the degenerate boundary of the stable-kernel model $\mathcal M_S$.

\subsection{Treatment Choice Data: Kernel Degeneracy and Responsiveness}
\label{subsec:panel-choice}

The first result gives a diagnostic for the degenerate boundary $\mathcal M_D$. Under $\mathcal M_S$, the panel decomposes the cross-sectional variance of treatment into a within-type component and a between-type component,
\begin{equation}
\operatorname{Var}(D_{it} \mid Z_{it} = z)
\;=\;
\underbrace{E\big[p_i(z)\{1 - p_i(z)\}\big]}_{\text{within type}}
\;+\;
\underbrace{\operatorname{Var}\{p_i(z)\}}_{\text{between types}}.
\label{eq:vardecomp}
\end{equation}
A cross section identifies only the sum; two periods identify the summands.

\begin{proposition}[Treatment choice-kernel degeneracy index]
\label{prop:determinacy}Suppose the unconditional version of Assumption~\ref{ass:paneltypes} holds
and $\Pr(Z_{is}=Z_{it}=z)>0$ for some $s\neq t$, with
$0<E[p_i(z)]<1$. Then:
\begin{itemize}
\item[(a)] $\operatorname{Cov}(D_{is}, D_{it} \mid Z_{is} = Z_{it} = z) = \operatorname{Var}\{p_i(z)\}$ and the ratio
\[
\rho(z) \;\equiv\; \frac{\operatorname{Var}\{p_i(z)\}}{E[p_i(z)]\,\{1 - E[p_i(z)]\}} \;\in\; [0,1]
\]
is identified from two periods.
\item[(b)] $\rho(z) = 1$ if and only if $p_i(z) \in \{0,1\}$ almost surely, and $\rho(z) = 0$ if and only if $p_i(z)$ is constant across individuals. The switching probability satisfies
\[
\Pr(D_{is} \neq D_{it} \mid Z_{is} = Z_{it} = z)
\;=\;
2\, E[p_i(z)]\,\{1 - E[p_i(z)]\}\,\{1 - \rho(z)\}=2E\big[p_i(z)\{1 - p_i(z)\}\big].
\]
\item[(c)] Within $M_S$, the stable deterministic boundary $M_D$ holds
at every repeated state $z$ if and only if, for each such state, either
$E[p_i(z)]\in\{0,1\}$ or $\rho(z)=1$. In particular, at states with
$0<E[p_i(z)]<1$, degeneracy is equivalent to $\rho(z)=1$.
Thus, within-person switching at a repeated state refutes $M_D$, although
it does not refute an unrestricted deterministic-path representation.
\end{itemize}
\end{proposition}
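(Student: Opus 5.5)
The plan is to condition on the type $G_i$ and the state path, and to exploit the serial independence in Assumption~\ref{ass:paneltypes}(ii) together with strict exogeneity (iv). For part (a), on the event $\{Z_{is}=Z_{it}=z\}$ (which has positive probability by hypothesis), parts (ii) and (iv) give
\[
E[D_{is}D_{it}\mid G_i,\mathbf Z_i]=p_i(z)^2
\qquad\text{and}\qquad
E[D_{it}\mid G_i,\mathbf Z_i]=p_i(z).
\]
Because $\mathbf Z_i\Perp G_i$, integrating over $G_i$ conditional on the event yields $E[D_{is}D_{it}\mid\cdot]=E[p_i(z)^2]$ and $E[D_{it}\mid\cdot]=E[p_i(z)]$. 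Subtracting the product of the means gives the covariance identity $\Var\{p_i(z)\}$. The same computation with $D_{it}^2=D_{it}$ gives the decomposition \eqref{eq:vardecomp}, namely $\Var(D_{it}\mid\cdot)=E[p_i(z)]\{1-E[p_i(z)]\}$. Hence $\rho(z)$ is a ratio of identified objects, and its denominator is positive because $0<E[p_i(z)]<1$.

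To show $\rho(z)\in[0,1]$, I will use $p_i(z)^2\le p_i(z)$ since $p_i(z)\in[0,1]$. This gives $E[p_i^2]\le E[p_i]$, so
\[
\Var\{p_i(z)\}\le E[p_i(z)]-E[p_i(z)]^2,
\]
which is exactly the denominator. Equivalently, the slack in this inequality is $E[p_i(1-p_i)]\ge 0$. For part (b), $\rho(z)=1$ holds if and only if $E[p_i(z)\{1-p_i(z)\}]=0$. Since the integrand is nonnegative, this holds if and only if $p_i(1-p_i)=0$ almost surely, that is, $p_i(z)\in\{0,1\}$ almost surely. Likewise, $\rho(z)=0$ holds if and only if $\Var\{p_i(z)\}=0$, that is, $p_i(z)$ is almost surely constant. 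For the switching probability, conditional independence gives
\[
\Pr(D_{is}\ne D_{it}\mid G_i,\mathbf Z_i)=2p_i(z)\{1-p_i(z)\},
\]
and integrating gives $2E[p_i(1-p_i)]$. This equals $2\big(E[p_i](1-E[p_i])-\Var\{p_i\}\big)=2E[p_i](1-E[p_i])(1-\rho)$ by the definition of $\rho$.

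For part (c), I will argue state by state. Under $\mathcal M_S$, $\mathcal M_D$ at state $z$ means $p_i(z)\in\{0,1\}$ almost surely. If $E[p_i(z)]\in\{0,1\}$, then since $p_i(z)\in[0,1]$ we must have $p_i(z)$ equal to $0$ or $1$ almost surely, so degeneracy holds trivially. Otherwise $\rho(z)$ is well defined and (b) applies. For the converse, degeneracy with an interior mean forces $\rho(z)=1$ by (b). The refutation remark then follows: by (b), switching has positive probability if and only if $E[p_i(1-p_i)]>0$, which contradicts $\rho(z)=1$. The caveat about unrestricted deterministic paths is the representation argument already given in the text, so nothing further needs proving there.

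The main obstacle is bookkeeping rather than mathematics: conditioning on the event $\{Z_{is}=Z_{it}=z\}$ must not distort the distribution of $G_i$. This is exactly where strict exogeneity (iv) is needed, including independence of $\mathbf Z_i$ from the residual draws. I would spell out that step once, via iterated expectations over $(G_i,\mathbf Z_i)$, and then reuse it in every part.
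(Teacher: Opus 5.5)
Your proposal is correct and follows essentially the same route as the paper's proof. You condition on $(G_i,\mathbf Z_i)$, use serial independence in Assumption~\ref{ass:paneltypes}(ii) and state-path independence in (iv) to identify $E[p_i(z)^2]$ and $E[p_i(z)]$, and then use $p^2\le p$ on $[0,1]$ for the range and the characterizations in (b); your explicit handling of the case $E[p_i(z)]\in\{0,1\}$ in part (c), where $\rho(z)$ is undefined, makes precise a step the paper leaves implicit.
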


\begin{proof}
By Assumption~\ref{ass:paneltypes}(ii) and (iv), conditional on the event $\{Z_{is} = Z_{it} = z\}$ the tuple $(D_{is}, D_{it})$ is, given $G_i$, a pair of independent Bernoulli$(p_i(z))$ draws. Hence $E[D_{is} D_{it} \mid Z_{is} = Z_{it} = z] = E_{G}\{E[D_{is} D_{it} \mid G_i; Z_{is} = Z_{it} = z] \} = E[p_i(z)^2]$ where we use the Law of Iterated Expectations and where the first equality uses iterated expectations and conditional
independence in Assumption~\ref{ass:paneltypes}(ii), and the second uses
the unconditional state-path independence in
Assumption~\ref{ass:paneltypes}(iv), which implies that conditioning on
$Z_{is}=Z_{it}=z$ does not change the distribution of $G_i$.
  Similarly, $E[D_{it} \mid Z_{is} = Z_{it} = z] = E[p_i(z)]$.  This thus allows us to obtain $E\big[p_i(z)\{1 - p_i(z)\}\big]=E[D_{it} \mid Z_{is} = Z_{it} = z]-E[D_{is} D_{it} \mid Z_{is} = Z_{it} = z]$ and, consequently, $\operatorname{Var}\{p_i(z)\}=\operatorname{Var}(D_{it} \mid Z_{it} = z)-E[D_{it} \mid Z_{is} = Z_{it} = z]+E[D_{is} D_{it} \mid Z_{is} = Z_{it} = z]$.  The denominator $E[p_i(z)]\,\{1 - E[p_i(z)]\}$ equals $\operatorname{Var}(D_{it} \mid Z_{it} = z)=E[D_{it}^2 \mid Z_{is} = Z_{it} = z]-E[D_{it} \mid Z_{is} = Z_{it} = z]^2=E[D_{it} \mid Z_{is} = Z_{it} = z]-E[D_{it} \mid Z_{is} = Z_{it} = z]^2=E[p_i(z)]-E[p_i(z)]^2$, where the second equality uses $D_{it}^2=D_{it}$ and the last equality follows from similar steps as above.  The ratio $\rho(z)$ is thus identified, giving (a) and, by direct computation, the switching formula in (b). For the range (i.e., $\rho(z) \in [0,1]$) and the characterizations, note that $p^2 \leq p$ on $[0,1]$, so $\operatorname{Var}(p) = E[p^2] - (E[p])^2 \leq E[p]\{1 - E[p]\}$ with equality if and only if $p^2 = p$ (almost surely(, i.e., $p \in \{0,1\}$ (almost surely); and  $\operatorname{Var}(p) = 0$ if and only if $p$ is constant. Part (c) follows since time-invariant $D_i^{*}(\cdot)$ is the case $p_i(z) = D_i^{*}(z) \in \{0,1\}$.
\end{proof}

The index $\rho(z)$ encodes information that the cross-section along cannot offer. The case $\rho(z)=1$ is the stable deterministic boundary, while $\rho(z)=0$ means that $p_i(z)$ is homogeneous across individuals. If this homogeneity holds at both $z$ and $z'$, responsiveness is constant and the Wald estimand is the ATE. Intermediate values measure the share of observed treatment variance attributed by $\mathcal M_S$ to persistent heterogeneity. In repeated exposures to a fixed state, the time until first treatment is conditionally geometric with hazard $p_i(z)$, so mixing over $p_i(z)$ generates negative duration dependence \citep{blumenetal1955,heckman1981}. The index should therefore be interpreted only under the maintained restrictions that separate heterogeneity from state dependence.

The next result extends this logic from a single state to the contrast $(z', z)$ and delivers the objects that Sections~\ref{sec:familiar}--\ref{sec:monotonicity} treat as latent.

\begin{proposition}[Identification of choice-kernel moments and responsiveness]
\label{prop:choicemoments}
Suppose Assumptions~\ref{ass:paneltypes} and~\ref{ass:pathrich} hold, and write
\[
r_i=r_i(z',z)=p_i(z')-p_i(z).
\]
Then:
\begin{itemize}

\item[(a)] For every $\zeta$ in the support of $\mathbf{Z}_i$ and every
$S\subseteq\{1,\ldots,T\}$,
\[
E\left[
\prod_{t\in S}D_{it}
\,\middle|\,
\mathbf{Z}_i=\zeta
\right]
=
E\left[
\prod_{t\in S}p_i(\zeta_t)
\right].
\]
Consequently, all cross moments
\[
E\left[p_i(z)^a p_i(z')^b\right],
\qquad a+b\leq T,
\]
are identified.

\item[(b)] In particular, $E[r_i^k]$ is identified for every $k\leq T$.
If $E[r_i]\neq0$, then with $T\geq2$ the variance of the normalized IV weight
\[
W_i(z',z)=\frac{r_i}{E[r_i]}
\]
is identified.

\item[(c)] (\emph{Refutable implications.})
Let $m_0=1$ and $m_k=E[r_i^k]$, $k=1,\ldots,T$. Based on these identified
moments, one-sided responsiveness, $r_i\geq0$ almost surely, is feasible if
and only if
\begin{equation}
(m_0,m_1,\ldots,m_T)\in
\mathcal H_T([0,1])
\equiv
\left\{
\left(\int_0^1 r^k\,dF(r)\right)_{k=0}^{T}
:
F\text{ is a probability measure on }[0,1]
\right\}.
\label{eq:hausdorff}
\end{equation}
Thus the truncated Hausdorff moment restrictions provide an exact
finite-$T$ feasibility test for one-sided responsiveness based on the
identified moments of $r_i$.\footnote{The Hausdorff moment problem gives
necessary and sufficient conditions for a sequence
$\{m_0,m_1,m_2,\ldots\}$ to admit a representation
$m_k=\int_0^1 r^k\,dF(r)$ for some probability measure $F$ on the closed
unit interval.}

When $T\geq2$, a particularly simple necessary implication is
\[
m_2\leq m_1.
\]
Hence
\[
E[r_i^2]>E[r_i]
\qquad\Longrightarrow\qquad
\Pr\{r_i(z',z)<0\}>0.
\]
Moreover, within the one-sided model,
\[
m_2=m_1
\qquad\Longleftrightarrow\qquad
r_i\in\{0,1\}\quad\text{almost surely}.
\]

\end{itemize}
\end{proposition}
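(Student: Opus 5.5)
The plan follows the three parts of the proposition in order; part (a) is a conditioning computation, (b) is algebra on the moments from (a), and (c) turns the feasibility question into a moment problem.

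\emph{Part (a).} Condition first on $(G_i,\mathbf Z_i=\zeta)$. By Assumption~\ref{ass:paneltypes}(ii), the draws $D_{it}=D_{it}(\zeta_t)$ are conditionally independent Bernoulli$(p_i(\zeta_t))$, so
\[
E\Big[\prod_{t\in S}D_{it}\,\Big|\,G_i,\mathbf Z_i=\zeta\Big]=\prod_{t\in S}p_i(\zeta_t).
\]
Iterating expectations and using $\mathbf Z_i\Perp G_i$ from part (iv) removes the conditioning on $\zeta$ and gives the displayed identity. This is the same argument as in Proposition~\ref{prop:determinacy}, now applied to general products.

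To reach the cross moments, fix $a+b\le T$. Assumption~\ref{ass:pathrich} supplies a path $\zeta$ with $\zeta_t=z$ for $t\le a$ and $\zeta_t=z'$ for $a<t\le a+b$. Taking $S=\{1,\dots,a+b\}$ then yields $E[p_i(z)^ap_i(z')^b]$ as an observable conditional mean.

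\emph{Part (b).} Expand by the binomial theorem:
\[
E[r_i^k]=\sum_{j=0}^{k}\binom{k}{j}(-1)^{k-j}E[p_i(z')^jp_i(z)^{k-j}].
\]
Each term has total degree $k\le T$, so each is identified by (a). For $T\ge2$,
\[
\Var(W_i)=\frac{E[r_i^2]}{E[r_i]^2}-1,
\]
which is identified whenever $E[r_i]\neq0$.

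\emph{Part (c), the Hausdorff characterization.} This is the main step, and the key point is the range of $r_i$. Since $r_i\in[-1,1]$ always, one-sided responsiveness $r_i\ge0$ means exactly that the law of $r_i$ is a probability measure on $[0,1]$.

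For necessity: if $r_i\ge0$ a.s., its distribution $F$ lives on $[0,1]$ and reproduces $(m_0,\dots,m_T)$, so the vector lies in $\mathcal H_T([0,1])$.

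For sufficiency, we need feasibility relative to the identified moments. If the vector lies in $\mathcal H_T([0,1])$, take a representing $F$ on $[0,1]$. We must show some joint law of $(p_i(z),p_i(z'))\in[0,1]^2$ has $r=p(z')-p(z)$ with law $F$ and matches every identified cross moment, not merely the moments of $r_i$. I would first try to read ``feasible'' as the statement phrases it, namely ``based on these identified moments'' of $r_i$. Then sufficiency reduces to exhibiting a type distribution whose $r$ has law $F$ and reproduces $m_1,\dots,m_T$. A simple choice is $p_i(z)=0$ and $p_i(z')=r$.

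I expect this interpretive step to be the main obstacle. If the stronger, joint version were demanded, it would require a two-dimensional truncated moment problem on the region $\{0\le p\le p'\le1\}$, and I would flag rather than prove that version.

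\emph{Part (c), the remaining implications.}
\begin{itemize}
\item $m_2\le m_1$: for $r\in[0,1]$ we have $r^2\le r$; integrating gives the inequality. Its contrapositive gives the refutation statement $E[r_i^2]>E[r_i]\Rightarrow\Pr\{r_i(z',z)<0\}>0$.
\item $m_2=m_1$ within the one-sided model: this says $E[r_i(1-r_i)]=0$ with a nonnegative integrand, hence $r_i(1-r_i)=0$ a.s., i.e.\ $r_i\in\{0,1\}$ a.s. The converse is immediate. This mirrors the argument for $\rho(z)=1$ in Proposition~\ref{prop:determinacy}(b).
\end{itemize}
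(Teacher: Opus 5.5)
Your proposal is correct and follows the paper's proof essentially step for step: iterated expectations with Assumption~\ref{ass:paneltypes}(ii) and (iv) plus path richness for (a), the binomial expansion for (b), and for (c) feasibility read as relative to the identified moments of $r_i$ alone, followed by the $r(1-r)\ge 0$ argument. Your explicit realization $p_i(z)=0$, $p_i(z')=r$ for sufficiency spells out a step the paper leaves implicit, and the joint two-dimensional moment problem you flag is the sharpening the paper mentions after the proof, not part of what the proposition claims.
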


\begin{proof}
For part (a), iterated expectations give
\[
E\left[
\prod_{t\in S}D_{it}
\,\middle|\,
\mathbf{Z}_i=\zeta
\right]
=
E\left[
E\left\{
\prod_{t\in S}D_{it}
\,\middle|\,
G_i,\mathbf{Z}_i=\zeta
\right\}
\,\middle|\,
\mathbf{Z}_i=\zeta
\right].
\]
By Assumption~\ref{ass:paneltypes}(ii), conditional on
$(G_i,\mathbf{Z}_i)$ the treatment indicators are independent Bernoulli
draws with success probabilities $p_i(Z_{it})$. Therefore,
\[
E\left\{
\prod_{t\in S}D_{it}
\,\middle|\,
G_i,\mathbf{Z}_i=\zeta
\right\}
=
\prod_{t\in S}p_i(\zeta_t).
\]
Assumption~\ref{ass:paneltypes}(iv) implies that conditioning on
$\mathbf{Z}_i=\zeta$ does not change the distribution of $G_i$, and hence
\[
E\left[
\prod_{t\in S}D_{it}
\,\middle|\,
\mathbf{Z}_i=\zeta
\right]
=
E\left[
\prod_{t\in S}p_i(\zeta_t)
\right].
\]
For any nonnegative integers $a$ and $b$ with $a+b\leq T$,
Assumption~\ref{ass:pathrich} allows us to choose a state path and a set
$S$ containing $a$ periods at state $z$ and $b$ periods at state $z'$.
The corresponding observable moment therefore identifies
$E[p_i(z)^a p_i(z')^b]$.

For part (b), the binomial expansion
\[
r_i^k
=
\{p_i(z')-p_i(z)\}^k
=
\sum_{j=0}^k
\binom{k}{j}
(-1)^{k-j}
p_i(z')^j p_i(z)^{k-j}
\]
expresses $E[r_i^k]$ as a linear combination of the cross moments
identified in part (a). When $T\geq2$ and $E[r_i]\neq0$,
\[
\Var\{W_i(z',z)\}
=
\frac{E[r_i^2]-E[r_i]^2}{E[r_i]^2},
\]
so the variance of the normalized weight is identified.

For part (c), based on the moments $(m_0,\ldots,m_T)$, one-sided
responsiveness is feasible exactly when those moments admit a representing
probability measure supported on $[0,1]$. This is precisely the truncated
Hausdorff moment condition in~\eqref{eq:hausdorff}.

For the simple second-moment implication, note that
\[
r(1-r)\geq0
\qquad\text{for every }r\in[0,1].
\]
Thus one-sided responsiveness implies
\[
E[r_i(1-r_i)]
=
m_1-m_2
\geq0,
\]
or equivalently $m_2\leq m_1$. Since $r_i\leq1$ always,
$m_2>m_1$ implies that $r_i(1-r_i)<0$ with positive probability, which
can occur only when $r_i<0$. Hence
\[
E[r_i^2]>E[r_i]
\quad\Longrightarrow\quad
\Pr(r_i<0)>0.
\]

Finally, under one-sided responsiveness,
$r_i(1-r_i)\geq0$ almost surely. Therefore
\[
m_2=m_1
\quad\Longleftrightarrow\quad
E[r_i(1-r_i)]=0
\quad\Longleftrightarrow\quad
r_i(1-r_i)=0\ \text{almost surely},
\]
which is equivalent to $r_i\in\{0,1\}$ almost surely.
\end{proof}

The last characterization concerns the responsiveness distribution, not the
entire treatment-choice kernel. In particular, $r_i=0$ implies only that
$p_i(z')=p_i(z)$; their common value need not be zero or one. For example,
\[
p_i(z)=p_i(z')=0.4
\]
implies $r_i=0$, even though treatment choice remains stochastic at both
states. Thus $m_2=m_1$ within the one-sided model establishes binary
responsiveness, but does not establish that
$p_i(z),p_i(z')\in\{0,1\}$ almost surely. Degeneracy of the treatment-choice
kernel at the repeated states is the stronger restriction characterized in
Proposition~4.

The identified moments can also be used to quantify possible violations of
one-sided responsiveness. Using only $(m_0,\ldots,m_T)$, sharp lower and
upper bounds on the fraction of negatively responsive types are obtained by
minimizing and maximizing
\[
\int_{-1}^{1}\mathbf{1}\{r<0\}\,dF(r)
\]
over all probability distributions $F$ on $[-1,1]$ satisfying
\[
\int_{-1}^{1}r^k\,dF(r)=m_k,
\qquad k=0,\ldots,T.
\]
Thus a finite panel need not identify $\Pr(r_i<0)$, but it can restrict that
fraction through the identified moments. Using all of the cross moments of
$(p_i(z),p_i(z'))$ identified in Proposition~\ref{prop:choicemoments}(a)
can sharpen these restrictions by imposing the corresponding moment problem
directly on
\[
\{(u,v)\in[0,1]^2:u\leq v\}.
\]

\begin{corollary}[Long-panel identification of responsiveness]
\label{cor:longpanelchoice}
Suppose Assumptions~\ref{ass:paneltypes} and~\ref{ass:pathrich} hold. If each
of $z$ and $z'$ is visited infinitely often per individual as $T\to\infty$,
then the population long-panel law identifies the joint distribution $F_P$
of
\[
P_i=\bigl(p_i(z),p_i(z')\bigr),
\]
and hence the distribution of $r_i(z',z)$. If $E[r_i(z',z)]\neq0$, it also
identifies the distribution of $W_i(z',z)$.
\end{corollary}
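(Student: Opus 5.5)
The plan is to reduce the corollary to a moment-determinacy argument on the compact set $[0,1]^2$, using Proposition~\ref{prop:choicemoments}(a) as the engine.

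\emph{Step 1: identify all joint moments.} Fix nonnegative integers $a,b$. Because each of $z$ and $z'$ is visited infinitely often, for $T$ large enough there are state paths $\zeta$ in the support of $\mathbf Z_i$ with at least $a$ periods at $z$ and $b$ periods at $z'$. Assumption~\ref{ass:pathrich} guarantees such paths have positive probability at each finite horizon. Choose $S$ to consist of exactly $a$ of the $z$-periods and $b$ of the $z'$-periods. Proposition~\ref{prop:choicemoments}(a) then gives
\[
E\Bigl[\prod_{t\in S}D_{it}\,\Big|\,\mathbf Z_i=\zeta\Bigr]=E\bigl[p_i(z)^a p_i(z')^b\bigr].
\]
Hence every mixed moment of $P_i$ is identified from the long-panel law. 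The only subtlety is that the long-panel law contains all finite-horizon marginals; these are consistent by Assumption~\ref{ass:paneltypes}, so any fixed $(a,b)$ is read off a finite-$T$ marginal.

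\emph{Step 2: moments determine the law.} $P_i$ takes values in the compact set $[0,1]^2$. Polynomials in $(u,v)$ are dense in $C([0,1]^2)$ by Stone--Weierstrass. Therefore $E[g(P_i)]$ is identified for every continuous $g$, and the Riesz representation (uniqueness of a Borel probability measure given its integrals against continuous functions) pins down $F_P$. This is the two-dimensional Hausdorff moment problem, which is determinate on a compact set. As an alternative route that also illuminates the ``long panel'' phrasing, one can use frequencies instead. Within an individual, the empirical treatment frequencies over the visits to $z$ and to $z'$ converge almost surely to $(p_i(z),p_i(z'))$ by the conditional strong law, since the draws are i.i.d.\ given $G_i$ by Assumption~\ref{ass:paneltypes}(ii) and (iv). The cross-sectional law of these limits is therefore $F_P$, in the spirit of \citet{diaconisfreedman1980}. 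I would present the moment argument as the proof and mention the frequency argument as intuition.

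\emph{Step 3: downstream distributions.} The distribution of $r_i=p_i(z')-p_i(z)$ is the pushforward of $F_P$ under the continuous map $(u,v)\mapsto v-u$, so it is identified. If $E[r_i]\neq0$, then $E[r_i]=\int(v-u)\,dF_P$ is identified, and $W_i=r_i/E[r_i]$ is a known rescaling of $r_i$, so its distribution is identified as well.

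The main obstacle is the formal meaning of ``long-panel law'' as $T\to\infty$ with a random state path: one must ensure that for each $(a,b)$ some positive-probability path contains enough visits to each state. I would handle this by working with the projective family of finite-$T$ laws and invoking the infinite-visit condition together with Assumption~\ref{ass:pathrich}.
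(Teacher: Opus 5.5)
Your proof is correct, but your primary route is not the paper's.

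\textbf{The paper's route.} The paper's proof is exactly the argument you relegate to intuition. By Assumption~\ref{ass:paneltypes}(ii), conditional on $(G_i,\mathbf Z_i)$, the choices in the periods with $Z_{it}=z$ (resp.\ $z'$) form an i.i.d.\ Bernoulli sequence with success probability $p_i(z)$ (resp.\ $p_i(z')$). So the two within-person treatment frequencies converge almost surely to $P_i$. Because these frequencies are functions of the observed path, their limiting joint law is identified and equals $F_P$. The laws of $r_i$ and $W_i$ then follow as known transformations.

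\textbf{Your route.} You identify every mixed moment $E[p_i(z)^a p_i(z')^b]$ from finite-horizon marginals via Proposition~\ref{prop:choicemoments}(a). You then invoke determinacy of the moment problem on $[0,1]^2$, via Stone--Weierstrass plus uniqueness in the Riesz representation. This is the same device the paper itself uses later, in the proof of Proposition~\ref{prop:paneloutcomes}(b).

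\textbf{What each buys.} The two routes lean on different hypotheses.
\begin{itemize}
\item The frequency argument needs only Assumption~\ref{ass:paneltypes}(ii) and the per-individual infinite-visit condition. It uses neither Assumption~\ref{ass:pathrich} nor the between-type independence $\mathbf Z_i\Perp G_i$, since the almost-sure limit equals $P_i$ whatever the state design. It is also constructive: it exhibits an individual-level statistic that converges to $P_i$.
\item Your argument needs state-path exogeneity to strip the conditioning on $\zeta$ in Proposition~\ref{prop:choicemoments}(a). It does not need infinite visits per individual. It suffices that, for each $(a,b)$, some finite horizon carries a positive-probability path with $a$ visits to $z$ and $b$ visits to $z'$. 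So your argument shows that identification already follows from the finite-$T$ laws of ever longer panels. It also makes explicit the link to the finite-$T$ moment results and to the Hausdorff formulation in Proposition~\ref{prop:choicemoments}(c).
\end{itemize}
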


\begin{proof}
Conditional on $(G_i,\mathbf Z_i)$, Assumption~\ref{ass:paneltypes}(ii)
implies that treatment choices observed on the subsequence of periods for
which $Z_{it}=z$ are independent Bernoulli draws with common success
probability $p_i(z)$. If state $z$ is visited infinitely often, the
corresponding within-person treatment frequency therefore converges almost
surely to $p_i(z)$. The analogous frequency for periods with $Z_{it}=z'$
converges almost surely to $p_i(z')$. Hence the long-panel law identifies
the joint distribution of
\[
P_i=\bigl(p_i(z),p_i(z')\bigr).
\]
Since
\[
r_i(z',z)=p_i(z')-p_i(z)
\]
is a known function of $P_i$, its distribution is identified. When
$E[r_i(z',z)]\neq0$, the same is true of
\[
W_i(z',z)=\frac{r_i(z',z)}{E[r_i(z',z)]}.
\]
\end{proof}

Proposition~\ref{prop:choicemoments}(c) gives the monotonicity discussion an empirical handle that the cross section cannot. One-sided responsiveness---the condition under which the Wald ratio is a convex average by Proposition~\ref{prop:onesided}---requires all of the moment restrictions in \eqref{eq:hausdorff}. Their failure is consistent with the information-refinement channel in Section~\ref{sec:information}, which can move treatment probabilities in both directions. Proposition~\ref{prop:choicemoments}(b) also makes the interpretation quantitative at $T=2$: the identified dispersion of $W_i$ measures the scope for the Wald estimand to differ from the ATE for a given amount of treatment-effect heterogeneity, complementing the sensitivity analysis of \citet{small2017}.

\subsection{Outcome Data and Treatment-Effect Heterogeneity}
\label{subsec:panel-outcomes}

We now add outcomes and maintain the exclusion restriction in
Assumption~\ref{exclusion}, so that $\mu_i(d,z)=\mu_i(d)$ and
$\Delta_i=\mu_i(1)-\mu_i(0)$. Repeated observations then identify moments
linking individual mean outcomes to individual treatment probabilities. With a
finite panel, these are polynomial co-moments. In the long-panel limit, they
identify how mean treatment effects vary with
$P_i=(p_i(z),p_i(z'))$, except for types who never reveal one of the two
outcome arms. The basic identity is as follows. For any period $t$, disjoint
$A,B\subseteq\{1,\ldots,T\}\setminus\{t\}$, and $\zeta$ in the support of
$\mathbf Z_i$,
\begin{equation}
E\Big[Y_{it}D_{it}\prod_{s\in A}D_{is}
                  \prod_{u\in B}(1-D_{iu})
  \,\Big|\,\mathbf Z_i=\zeta\Big]
=
E\Big[\mu_i(1)p_i(\zeta_t)
       \prod_{s\in A}p_i(\zeta_s)
       \prod_{u\in B}\{1-p_i(\zeta_u)\}\Big].
\label{eq:outcomeidentity}
\end{equation}
There is an analogous identity for the untreated outcome, obtained by
replacing the leading $D_{it}$ by $1-D_{it}$, $\mu_i(1)$ by $\mu_i(0)$,
and $p_i(\zeta_t)$ by $1-p_i(\zeta_t)$. Equation~\eqref{eq:outcomeidentity}
follows from Assumption~\ref{ass:paneltypes}(ii)--(iv) by iterated
expectations. Each treatment indicator from another period contributes the
corresponding factor $p_i(\zeta_s)$, while each non-treatment indicator
contributes $1-p_i(\zeta_u)$.

Writing $u=p_i(z)$ and $v=p_i(z')$, the panel therefore identifies
$E[\mu_i(1)g(u,v)]$ for polynomials $g$ of degree at most $T$ satisfying
$g(0,0)=0$, and $E[\mu_i(0)g(u,v)]$ for polynomials of degree at most $T$
satisfying $g(1,1)=0$. These restrictions have a direct interpretation. Types
with $(u,v)=(0,0)$ never reveal their treated outcome mean, while types with
$(u,v)=(1,1)$ never reveal their untreated outcome mean. With finite $T$, the
panel identifies only the corresponding polynomial co-moments. In the
long-panel limit, the two corners are the only points at which the relevant
conditional mean remains unidentified.

\begin{proposition}[Treatment effects conditional on treatment probabilities]
\label{prop:paneloutcomes}
Suppose Assumptions~\ref{exclusion}, \ref{ass:paneltypes}
and~\ref{ass:pathrich} hold, $E[|\mu_i(d)|]<\infty$ for $d=0,1$, and let
$P_i=(u_i,v_i)=(p_i(z),p_i(z'))$ and $r_i=v_i-u_i$.
\begin{itemize}
\item[(a)] $E[\Delta_i g(P_i)]$ is identified for every polynomial $g$ of
degree at most $T$ satisfying $g(0,0)=g(1,1)=0$. In particular,
$E[\Delta_i r_i^{\,k}]$ is identified for $1\leq k\leq T$. The case $k=1$
is the reduced form of Section~\ref{sec:familiar}; the higher-order
co-moments are additional information supplied by the panel.
\item[(b)] Suppose each of $z$ and $z'$ is visited infinitely often per
individual as $T\to\infty$. Define
\[
m_d(P)\equiv E[\mu_i(d)\mid P_i=P],
\qquad
\gamma(P)\equiv m_1(P)-m_0(P).
\]
The population long-panel law identifies $F_P$, $m_1(P)$ for
$F_P$-almost every $P\neq(0,0)$, $m_0(P)$ for $F_P$-almost every
$P\neq(1,1)$, and therefore $\gamma(P)$ for $F_P$-almost every
$P\notin\{(0,0),(1,1)\}$.
\item[(c)] The treatment-effect schedule conditional on responsiveness,
\[
\beta(r)\equiv E[\Delta_i\mid r_i=r],
\]
is identified for $F_r$-almost every $r\neq0$. The restriction $r\neq0$
excludes both unidentified corners. The more detailed schedule $\gamma(P)$ is
also identified for types with $r=0$ whenever $P$ is not one of the two
corners.
\end{itemize}
\end{proposition}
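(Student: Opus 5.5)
The plan is to reduce everything to the basic identity \eqref{eq:outcomeidentity} and its untreated analogue, and then to a moment-determinacy argument on the compact square $[0,1]^2$. For part (a), I first show that the class of polynomials $g$ for which $E[\mu_i(1)g(P_i)]$ is identified is exactly the span of the monomials $u^a v^b$ with $1\le a+b\le T$.

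Take such a monomial and suppose $a\ge1$; otherwise interchange the roles of $z$ and $z'$. By Assumption~\ref{ass:pathrich} I choose a path $\zeta$ with $\zeta_t=z$, a set $A$ containing $a-1$ further periods at $z$ and $b$ periods at $z'$, and $B=\emptyset$. Then \eqref{eq:outcomeidentity} delivers $E[\mu_i(1)u_i^a v_i^b]$. Conversely, every right-hand side of \eqref{eq:outcomeidentity} carries the factor $p_i(\zeta_t)$, so it vanishes at $(0,0)$. Hence the identified class is precisely the polynomials of degree at most $T$ with $g(0,0)=0$.

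The untreated identity has the leading factor $1-p_i(\zeta_t)$. Repeating the argument, now taking $A=\emptyset$ and putting the remaining periods in $B$, identifies $E[\mu_i(0)(1-u_i)^a(1-v_i)^b]$ for $1\le a+b\le T$. Re-expanding around $(1,1)$, these span the polynomials of degree at most $T$ with $g(1,1)=0$. If $g$ vanishes at both corners, then
\[
E[\Delta_i g(P_i)]=E[\mu_i(1)g(P_i)]-E[\mu_i(0)g(P_i)]
\]
is identified. Finally, $r^k=(v-u)^k$ vanishes at $(0,0)$ and $(1,1)$, and for $k=1$, under Assumption~\ref{ass:paneltypes}(iv), $E[\Delta_i r_i]$ equals the reduced form $C_Y(z',z)$ of Section~\ref{sec:familiar}. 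Integrability of $E|\mu_i(d)|$ guarantees that all these expectations are finite.

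For part (b), $F_P$ is identified by Corollary~\ref{cor:longpanelchoice}. As $T\to\infty$ with both states visited infinitely often, part (a) applies at every finite degree. It therefore identifies $E[\mu_i(1)(u_i+v_i)h(P_i)]$ for every polynomial $h$, since $(u+v)h$ vanishes at $(0,0)$. Define the finite signed measure
\[
\nu_1(\dd P)=m_1(P)(u+v)\,F_P(\dd P).
\]
By Stone--Weierstrass, polynomials are dense in $C([0,1]^2)$, so $\nu_1$ is determined by its integrals against polynomials; dominated convergence applies because $|m_1|(u+v)\le 2|m_1|$ is $F_P$-integrable. Since $u+v>0$ exactly off $(0,0)$, dividing $\nu_1$ by $(u+v)$ and using $F_P$ identifies $m_1(P)$ for $F_P$-almost every $P\neq(0,0)$. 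The same argument with the weight $(2-u-v)$ identifies $m_0(P)$ for $F_P$-almost every $P\ne(1,1)$, and subtracting gives $\gamma(P)$ off both corners.

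For part (c), I write $\beta(r)=E[\gamma(P_i)\mid r_i=r]$ by iterated expectations. When $r\neq0$, the conditional law of $P_i$ given $r_i=r$ puts no mass on the corners, because both corners have $r=0$. That conditional law is itself identified from $F_P$, so $\beta(r)$ is identified for $F_r$-almost every $r\neq0$. The remark about types with $r=0$ is immediate from (b).

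I expect the main obstacle to be part (b). The difficulty is justifying that the long-panel law makes all finite-degree identities available simultaneously. It also requires that the moment determinacy carries over to signed measures weighted by the possibly unbounded $m_d$, which is where integrability of $\mu_i(d)$ is used. I would avoid within-person outcome averages, because Assumption~\ref{ass:paneltypes} restricts only conditional means and not the serial dependence of outcome shocks.
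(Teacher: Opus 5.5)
Your proposal is correct and follows essentially the same route as the paper. Part (a) builds monomials from \eqref{eq:outcomeidentity} and re-expands around $(1,1)$ for the untreated arm; part (b) combines density of polynomials on $[0,1]^2$, the $F_P$ of Corollary~\ref{cor:longpanelchoice}, and division by the identified measure off the corners; and part (c) integrates $\gamma(P)$ against the identified conditional law of $P_i$ given $r_i=r$, your explicit weights $(u+v)$ and $(2-u-v)$ being simply a more concrete implementation of the paper's ``away from the corner'' density step, while the ``exactly'' converse claim in (a) is not needed.
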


\begin{proof}[Proof sketch]
For part (a), any polynomial $g$ satisfying $g(0,0)=0$ is a linear
combination of monomials $u^av^b$ of positive degree. Each such monomial of
degree at most $T$ is obtained from \eqref{eq:outcomeidentity} by placing the
outcome $Y_{it}$ in one of the corresponding treatment periods. Rewriting the
polynomial in powers of $(1-u,1-v)$ gives the analogous result for $\mu_i(0)$
when $g(1,1)=0$. Both restrictions are therefore required to identify
$E[\Delta_i g(P_i)]$. Since $r=v-u$ vanishes at both corners, $r^k$ satisfies
these restrictions.

For part (b), as $T\to\infty$, the identified co-moments determine the finite
signed measure $m_1(P)dF_P(P)$ away from $(0,0)$ and the measure
$m_0(P)dF_P(P)$ away from $(1,1)$. This follows because polynomials are dense
in the continuous functions on the compact set $[0,1]^2$, and the relevant
polynomials vanish only at the corner where the corresponding outcome arm is
never observed. Corollary~\ref{cor:longpanelchoice}  identifies $F_P$.
The Radon--Nikodym derivatives therefore recover $m_1$ and $m_0$ on the
stated sets, and hence $\gamma$. 

Part (c) follows by integrating $\gamma(P)$
with respect to the identified conditional distribution of $P$ given
$v-u=r$. If $r\neq0$, neither unidentified corner can occur.
\end{proof}

The Wald estimand $\theta(z',z)$ is already identified from the cross section
when $E[r_i(z',z)]\neq0$. The additional long-panel content is that the
weights are also identified.

\begin{corollary}[IV weights in long panels]
\label{cor:paneliv}
Under the conditions of Proposition~\ref{prop:paneloutcomes}, $A^+$ and $A^-$
in Section~\ref{sec:monotonicity} are identified. The corresponding averages
$\theta^+$ and $\theta^-$ are identified whenever their respective
denominators are nonzero. The same argument applies to a finite collection of
states $\mathcal Z_0$ that are each visited infinitely often. It identifies
the treatment-effect schedule conditional on
$P_i^{\mathcal Z_0}=(p_i(z):z\in\mathcal Z_0)$ away from the all-zero and
all-one corners. Under conditional state independence, the argument is applied
jointly with the observed covariates $X_i$. Consequently, the weights of the 2SLS estimand in Section~\ref{subsec:tsls} is identified
whenever $\kappa_i^{2SLS}$ is an identified function of
$(P_i^{\mathcal Z_0},X_i)$ and $E[\kappa_i^{2SLS}]\neq0$.
\end{corollary}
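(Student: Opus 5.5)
The plan is to reduce everything to the long-panel objects already identified in Corollary~\ref{cor:longpanelchoice} and Proposition~\ref{prop:paneloutcomes}, and then write each target quantity as an integral of identified functions against identified measures. First, the terms $A^+$ and $A^-$ are functionals of the law of $r_i=v_i-u_i$ alone:
\[
A^+=\int r\,\mathbf 1\{r>0\}\,dF_r(r),\qquad A^-=\int (-r)\,\mathbf 1\{r<0\}\,dF_r(r).
\]
Corollary~\ref{cor:longpanelchoice} identifies $F_P$, and hence $F_r$. So $A^\pm$ are identified with no further work. Integrability is automatic because $|r_i|\le 1$.

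Second, I turn to the numerators of $\theta^\pm$. Conditioning on $P_i$ gives
\[
E[\Delta_i r_i\mathbf 1\{r_i>0\}]=\int \gamma(P)\,(v-u)\,\mathbf 1\{v>u\}\,dF_P(P),
\]
and the analogous expression holds for $r_i<0$. Using $\gamma(P)$ directly requires some care, since $\gamma$ need not be integrable separately from $m_1,m_0$. I would therefore write the integrand as $m_1(P)(v-u)\mathbf 1\{v>u\}-m_0(P)(v-u)\mathbf 1\{v>u\}$, noting that each piece is integrable because $E|\mu_i(d)|<\infty$. Each piece only charges points with $v\neq u$, and in particular excludes both corners $(0,0)$ and $(1,1)$. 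Proposition~\ref{prop:paneloutcomes}(b) identifies $m_1$ and $m_0$ $F_P$-a.e.\ on exactly this set, so both numerators are identified. Dividing by $A^\pm$ whenever that denominator is nonzero then gives $\theta^\pm$.

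Third, the extension to a finite collection $\mathcal Z_0$ repeats the arguments of Corollary~\ref{cor:longpanelchoice} and Proposition~\ref{prop:paneloutcomes} with $P_i^{\mathcal Z_0}\in[0,1]^{|\mathcal Z_0|}$. The steps are as follows:
\begin{itemize}
\item Within-person frequencies on the state-$z$ subsequences converge to $p_i(z)$, which identifies $F_{P^{\mathcal Z_0}}$.
\item Identity \eqref{eq:outcomeidentity} with paths mixing all states in $\mathcal Z_0$ identifies $E[\mu_i(1)g(P)]$ for polynomials vanishing at the all-zero point, and $E[\mu_i(0)g(P)]$ for polynomials vanishing at the all-one point.
\item Density of such polynomials (Stone--Weierstrass on the compact cube, restricted to continuous functions vanishing at one point) identifies the signed measures $m_d\,dF$ off the respective corner.
\item Radon--Nikodym derivatives then give the schedule away from the corners.
\end{itemize}
Under conditional independence, I would run the same argument within $X_i$ cells: condition every moment on $X_i=x$, and use Assumption~\ref{ass:paneltypes}(iv) conditionally to replace $Z$-conditioning with type moments given $X$.

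Finally, for 2SLS, suppose $\kappa_i^{2SLS}=k(P_i^{\mathcal Z_0},X_i)$ with $k$ identified. Then $E[\kappa_i^{2SLS}]$ is identified from the joint law of $(P^{\mathcal Z_0},X)$, so the weights $W_i^{2SLS}$ and their distribution are identified.

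The step I expect to be the main obstacle is the corner issue. If $\kappa_i^{2SLS}$ is nonzero at the all-zero or all-one corner, then $E[\Delta_i\kappa_i^{2SLS}]$ would need $\gamma$ there. However, $\kappa_i^{2SLS}=\sum_z D^{\mathrm{fs}}(z,X_i)\Pr(Z_i=z\mid X_i)\{p_i(z)-p_i(z_0)\}$ vanishes whenever $p_i(\cdot)$ is constant, and in particular at both corners. So the weighted treatment-effect average is recoverable, consistent with the statement's focus on the weights.
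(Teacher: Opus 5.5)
Your proposal is correct and follows essentially the same route as the paper: $A^\pm$ are functionals of the law of $r_i$ identified in Corollary~\ref{cor:longpanelchoice}; the numerators of $\theta^\pm$ integrate the schedule from Proposition~\ref{prop:paneloutcomes} over $\{r>0\}$ and $\{r<0\}$, which exclude both corners; the $\mathcal Z_0$ and covariate extensions repeat that argument; and the 2SLS case rests on $\kappa_i^{2SLS}$ vanishing at the all-zero and all-one corners, as the paper notes after the corollary. Your integrability worry is unnecessary, since $\gamma=m_1-m_0$ is integrable whenever $E|\mu_i(d)|<\infty$, but splitting into the $m_1$ and $m_0$ pieces is harmless and makes clear which corner each piece must avoid.
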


The last statement requires the full vector of treatment probabilities, not
only $r_i(z',z)$. For a discrete multivalued instrument,
$\kappa_i^{2SLS}$ is a linear combination of the components of
$P_i^{\mathcal Z_0}$ whose coefficients sum to zero. It therefore vanishes at
the all-zero and all-one corners, exactly where one of the two outcome means is
not identified. The conditional treatment-effect schedule is therefore
sufficient to recover the treatment-effect average represented by the 2SLS
estimand.

This result gives empirical content to the decomposition in
Section~\ref{sec:monotonicity}. When responsiveness changes sign,
Corollary~\ref{cor:paneliv} identifies the positive and negative components
separately, so the analyst can report the two convex averages rather than only
their signed ratio. The same schedule also shows how different instruments or
2SLS specifications produce different treatment-effect averages through their
different treatment-probability weights.

Repeated outcomes can also sharpen what is learned about the ATE, even when
individual treatment effects are not fully identified.

\begin{corollary}[ATE bounds from repeated outcomes]
\label{cor:panelate}
Suppose Assumptions~\ref{exclusion} and~\ref{ass:paneltypes} hold and
$Y_{it}(d,z)\in[\underline y,\overline y]$ almost surely for every $d$ and
$z$. Let
\[
a_{iT}=\prod_{t=1}^{T}\{1-p_i(Z_{it})\},
\qquad
b_{iT}=\prod_{t=1}^{T}p_i(Z_{it}),
\]
and define
\begin{align}
L_T={}&E[\mu_i(1)(1-a_{iT})]+\underline yE[a_{iT}]
      -E[\mu_i(0)(1-b_{iT})]-\overline yE[b_{iT}], \nonumber\\
U_T={}&E[\mu_i(1)(1-a_{iT})]+\overline yE[a_{iT}]
      -E[\mu_i(0)(1-b_{iT})]-\underline yE[b_{iT}].
\label{eq:panelatebounds}
\end{align}
Every term in \eqref{eq:panelatebounds} is identified and
\begin{equation}
ATE\in[L_T,U_T],
\qquad
U_T-L_T=(\overline y-\underline y)\{E[a_{iT}]+E[b_{iT}]\}.
\label{eq:panelatewidth}
\end{equation}
If $a_{iT}\to0$ and $b_{iT}\to0$ almost surely, then the width in
\eqref{eq:panelatewidth} converges to zero. A sufficient condition is that
some fixed state $z$ is visited infinitely often and $0<p_i(z)<1$ almost
surely. This sufficient condition requires no within-person variation in the
state.
\end{corollary}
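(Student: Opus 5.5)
The plan is to decompose the ATE into an identified part and a worst-case part. For each type, $\mu_i(1)=\mu_i(1)(1-a_{iT})+\mu_i(1)a_{iT}$ and $\mu_i(0)=\mu_i(0)(1-b_{iT})+\mu_i(0)b_{iT}$. Taking expectations gives
\[
ATE=E[\mu_i(1)(1-a_{iT})]-E[\mu_i(0)(1-b_{iT})]+E[\mu_i(1)a_{iT}]-E[\mu_i(0)b_{iT}].
\]
Since $\mu_i(d)\in[\underline y,\overline y]$ by the support restriction and $a_{iT},b_{iT}\ge 0$, the last two terms are bounded by $\underline yE[a_{iT}]\le E[\mu_i(1)a_{iT}]\le \overline yE[a_{iT}]$, and analogously for $b_{iT}$. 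This delivers $L_T\le ATE\le U_T$, and subtracting the two bounds gives the width formula \eqref{eq:panelatewidth} directly.

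For identification, $a_{iT}$ is the conditional probability, given $(G_i,\mathbf Z_i)$, that the individual is never treated, and $b_{iT}$ is the probability that she is always treated. By Assumption~\ref{ass:paneltypes}(ii),
\[
E[a_{iT}]=\Pr(D_{i1}=\cdots=D_{iT}=0),\qquad E[b_{iT}]=\Pr(D_{i1}=\cdots=D_{iT}=1).
\]
For $E[\mu_i(1)(1-a_{iT})]$, I will expand $1-a_{iT}$ by the first treated period:
\[
1-a_{iT}=\sum_{t=1}^T p_i(Z_{it})\prod_{s<t}\{1-p_i(Z_{is})\}.
\]
Each summand is identified by the identity \eqref{eq:outcomeidentity} with $A=\emptyset$ and $B=\{1,\dots,t-1\}$. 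Concretely, the term equals $E[Y_{it}D_{it}\prod_{s<t}(1-D_{is})]$ after integrating over $\mathbf Z_i$. The integration step uses Assumption~\ref{ass:paneltypes}(iv), which makes the expectations over $(G_i,\mathbf Z_i)$ observable averages, together with exclusion, which makes $\mu_i(1)$ state-free. Hence $E[\mu_i(1)(1-a_{iT})]=E[Y_{i\tau}\mathbf 1\{\tau\le T\}]$, where $\tau$ is the first treated period. The untreated analogue uses the first untreated period. Path richness is not needed, since no conditioning on a specific $\zeta$ is required.

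For the limit, the a.s. convergence $a_{iT},b_{iT}\to0$ together with $0\le a_{iT},b_{iT}\le1$ gives $E[a_{iT}]+E[b_{iT}]\to0$ by dominated convergence. For the sufficient condition, $a_{iT}\le\prod_{t\le T:Z_{it}=z}\{1-p_i(z)\}=\{1-p_i(z)\}^{N_{T}(z)}$, where $N_T(z)$ counts the visits to $z$. This tends to zero when $N_T(z)\to\infty$ and $p_i(z)>0$, and similarly $b_{iT}\le p_i(z)^{N_T(z)}\to0$ when $p_i(z)<1$. This argument needs no within-person variation in the state.

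I expect the main obstacle to be the bookkeeping in the identification step. The expansion must be written so that each term matches \eqref{eq:outcomeidentity} exactly and the correct outcome arm is revealed. The other potential subtlety is the justification for integrating over $\mathbf Z_i$ when $p_i(Z_{it})$ is random. This is handled by conditioning on $(G_i,\mathbf Z_i)$ first and applying Assumption~\ref{ass:paneltypes}(iii) for the outcome mean.
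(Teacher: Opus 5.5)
Your proposal is correct and follows the paper's argument. Both split each arm into the part revealed by at least one treated (resp.\ untreated) period, identify that part through the outcome identity \eqref{eq:outcomeidentity}, bound the remainder $E[\mu_i(1)a_{iT}]$ (resp.\ $E[\mu_i(0)b_{iT}]$) by the outcome support, and finish with dominated convergence. The differences are minor: you expand $1-a_{iT}$ by the first treated period (taking $A=\emptyset$, $B=\{1,\dots,t-1\}$), whereas the paper uses inclusion--exclusion. Your version gives the closed form $E[Y_{i\tau}\mathbf 1\{\tau\le T\}]$, and you also spell out the bound $a_{iT}\le\{1-p_i(z)\}^{N_T(z)}$ behind the sufficient condition, which the paper leaves implicit.
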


\begin{proof}
Let \(A_t=\{D_{it}=1\}\). Conditional on \(G_i\) and the realized state path
\(\mathbf Z_i\), Assumption 4(ii) implies
\[
a_{iT}
=
\prod_{t=1}^T \{1-p_i(Z_{it})\}
=
\Pr\left(
\bigcap_{t=1}^T A_t^c
\;\middle|\;
G_i,\mathbf Z_i
\right).
\]
Hence
\[
1-a_{iT}
=
\Pr\left(
\bigcup_{t=1}^T A_t
\;\middle|\;
G_i,\mathbf Z_i
\right),
\]
the conditional probability that treatment is observed at least once.
Applying the inclusion--exclusion formula to this union gives
\[
\begin{aligned}
1-a_{iT}
={}&
\sum_t p_i(Z_{it})
-\sum_{s<t} p_i(Z_{is})p_i(Z_{it}) \\
&+\sum_{r<s<t}
p_i(Z_{ir})p_i(Z_{is})p_i(Z_{it})
-\cdots .
\end{aligned}
\]
Therefore \(E[\mu_i(1)(1-a_{iT})]\) is a finite linear combination of
outcome--choice moments of the form in equation~(13).

The unobserved remainder \(E[\mu_i(1)a_{iT}]\) lies between
\[
\underline y\,E[a_{iT}]
\quad\text{and}\quad
\overline y\,E[a_{iT}].
\]
Moreover,
\[
E[a_{iT}]
=
\Pr(D_{i1}=\cdots=D_{iT}=0),
\]
after integrating over the state design. The argument for the untreated
outcome and \(b_{iT}\) is analogous. The convergence result follows by
dominated convergence.
\end{proof}

These are valid finite-$T$ bounds; they need not be sharp after all identified
panel moments are imposed. Their role is to provide an observable measure of
the remaining missing-outcome problem. At a single fixed state, interior
stochastic choice supplies within-person treatment overlap and the width can
vanish with repeated observations. Stable deterministic choice at that same
state cannot provide this overlap. With several states, however, deterministic
switchers may reveal both outcome arms. In that case, the limiting width is
positive only if there is positive probability of types who are never treated
or always treated over the set of visited states. The comparison is therefore
between alternative sources of within-person treatment variation, not between
stochastic and deterministic potential outcomes in general.

\subsection{Discussion: What the Panel Assumptions Buy, and What They Cost}
\label{subsec:panel-discussion}

Assumption~\ref{ass:paneltypes} does real work. Serial correlation in residual choice draws, structural state dependence and time-varying types all generate persistence that Proposition~\ref{prop:determinacy} would attribute to stable heterogeneity. This is the familiar problem of separating heterogeneity from state dependence \citep{heckman1981,chamberlain1984,chamberlain1985}. The maintained model has refutable implications: conditional on the multiset of states visited, the choice-path distribution is invariant to permutations of periods that leave states unchanged. Rejection of this exchangeability restriction rejects the maintained panel model, although it does not by itself distinguish state dependence, serial correlation and changing types.

Dynamic binary-choice models provide useful templates for relaxing the assumption that current choice depends only on the current state and the stable type. Finite mixtures of Markov processes, unrestricted transition heterogeneity and dynamic bounds are studied by \citet{kasaharashimotsu2009,browningcarro2014,honoretamer2006,torgovitsky2019}. Their results do not transfer mechanically to the present outcome--choice co-moments: a dynamic extension would require additional restrictions and a separate identification analysis. Elicited choice probabilities provide a complementary source of information. When stated probabilities can be interpreted and validated as measurements of $p_i(z)$, they can reduce the amount of repetition needed; a short panel can in turn be used to assess their calibration.

The results in this section concern conditional means. Assumption~\ref{ass:paneltypes}(iii) does not impose distributional independence of outcome shocks from the treatment path or conditional independence of outcome draws across periods. Even under those stronger restrictions, repeated outcome products first identify $p_i(z)$-weighted second moments of $\mu_i(d)$, not their unweighted counterparts. The present results therefore do not separate the between-type variance of mean outcomes from the within-type variance of the outcome kernels $Q_i$. Identifying those features would require stronger distributional assumptions and additional support or long-panel arguments.

Relative to work on distributions of heterogeneous effects in nonlinear panels \citep{arellanobonhomme2012,chernozhukovetal2013}, the distinctive object here is the joint dependence between treatment effects and the individual treatment-probability movements induced by an instrument or policy. That dependence is the weighting operator behind the IV estimands studied in Sections~\ref{sec:familiar}--\ref{sec:monotonicity}. Relative to the cross-sectional stochastic-monotonicity framework of \citet{small2017}, the stable type $p_i(z)$ imposes restrictions across repeated choices. When those restrictions are not rejected, the weights cease to be only an interpretation and become identified objects.

\section{Conclusion }
\label{sec:conclusion}

This paper reinterprets familiar causal estimands under stochastic potential outcomes and stochastic treatment choice. Each individual carries a stable response type consisting of a treatment choice kernel and an outcome kernel. Under state independence, an observed outcome contrast identifies the causal effect of the assigned state. Adding exclusion writes that state effect as a contrast in treatment effects weighted by how much the state changes each individual's probability of treatment.

The standard compliance taxonomy is a limiting case. Uniform responsiveness gives the ATE, one-sided responsiveness gives a convex average, and sign-changing responsiveness gives a signed contrast. The same logic applies to 2SLS, where the weighting operator depends on the first-stage specification as well as on instrument validity. The central empirical question is therefore whose treatment probabilities move, by how much, and in which direction.

Repeated choices give this interpretation additional content under explicit panel restrictions. Within the stable, conditionally independent kernel model, short panels identify moments of individual treatment probabilities and make one-sided responsiveness testable. Long panels identify the joint dependence between mean treatment effects and the vector of treatment probabilities across states, thereby identifying the weights behind binary IV and multivalued 2SLS estimands. The stable deterministic-response model is the degenerate boundary of this maintained panel model, rather than the only deterministic representation of a choice path.

Elicited choice probabilities offer a complementary route to the same objects when they can be interpreted as measurements of $p_i(z)$. An important next step is to allow structural state dependence, serially correlated choice shocks and changing information sets, and to determine which parts of the joint effect--propensity schedule remain identified under those more general dynamics.

\newpage
\appendix

\section{Static Observational Equivalence of Ex Ante and Ex Post Representations}
\label{app:equivalence}

This appendix records the sense in which an ex ante stochastic representation and an ex post deterministic representation are (mean) observationally equivalent in a static environment.  In a traditional discrete choice setting, this relates to the observation that ``the assumption that a single subject will draw independent utility functions in repeated choice settings and then proceed to maximize them is equivalent to a model in which the experimenter draws individuals randomly from a population with differing, but fixed, utility functions'' \citep{mcfadden1976}.

Consider a binary state $Z_i\in\{0,1\}$, binary treatment $D_i\in\{0,1\}$, and outcome $Y_i\in\cY\subseteq\R$. We consider the exclusionrestriction version of the model for simplicity where $Q_i(.|d,z) = Q_i(.|d)$. In the (ex ante) stochastic representation ($\mathcal{M}_S$), there exists a latent type
\[
        G_i=(p_i(0),p_i(1),Q_i(\cdot\mid 0),Q_i(\cdot\mid 1))
\]
such that $Z_i\Perp G_i$, treatment is Bernoulli with probability $p_i(z)$ under state $z$, and potential outcomes are drawn from the type-specific outcome kernels. In the (ex post) deterministic representation ($\mathcal{M}_D$), there exist jointly defined variables
\[
        D_i^*(0),D_i^*(1)\in\{0,1\},
        \qquad
        Y_i^*(0),Y_i^*(1)\in\cY,
\]
with $Z_i\Perp(D_i^*(0),D_i^*(1),Y_i^*(0),Y_i^*(1))$, $D_i=D_i^*(Z_i)$, and $Y_i=Y_i^*(D_i)$.

\begin{proposition}[Static observational equivalence]\label{prop:equivalence}The (ex ante) stochastic representation and the (ex post) deterministic representation generate the same set of observable conditional mean functions
\[
z\mapsto
\left(E[D_i\mid Z_i=z],\,E[Y_i\mid Z_i=z]\right).
\]Consequently, they generate the same reduced forms, first stages, and Wald estimands in the static binary-state environment described above.
\end{proposition}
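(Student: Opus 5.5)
We prove the claim by showing two set inclusions between the sets of achievable pairs $z\mapsto(E[D_i\mid Z_i=z],E[Y_i\mid Z_i=z])$, $z\in\{0,1\}$. First we compute what each representation generates. In the stochastic representation, Proposition~\ref{prop:state_itt} together with exclusion and Assumption~\ref{ass:separability} gives
\[
E[D_i\mid Z_i=z]=E[p_i(z)],\qquad E[Y_i\mid Z_i=z]=E[\mu_i(0)+p_i(z)\Delta_i].
\]
In the deterministic representation, independence of $Z_i$ from $(D_i^*(\cdot),Y_i^*(\cdot))$ gives
\[
E[D_i\mid Z_i=z]=E[D_i^*(z)],\qquad E[Y_i\mid Z_i=z]=E[Y_i^*(0)+D_i^*(z)\{Y_i^*(1)-Y_i^*(0)\}].
\]
Since reduced forms, first stages and Wald ratios are functions of these four numbers, the ``consequently'' part follows once the sets coincide.

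\emph{Deterministic $\subseteq$ stochastic.} Given a deterministic model, define the degenerate type $p_i(z)=D_i^*(z)$ and $Q_i(\cdot\mid d)=\delta_{Y_i^*(d)}$. Then $G_i$ is a function of the deterministic potential variables, so $Z_i\Perp G_i$ holds. There is no residual randomness, so Assumption~\ref{ass:separability} holds trivially. The two formulas above then coincide term by term.

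\emph{Stochastic $\subseteq$ deterministic.} This is the step that needs a construction. Given a stochastic model, we build the ex post variables on an enlarged probability space:
\[
D_i^*(z)=\mathbf 1\{U_i\le p_i(z)\}\quad\text{with }U_i\sim\text{Unif}(0,1)\text{ independent of }(G_i,Z_i),
\]
and $Y_i^*(d)\sim Q_i(\cdot\mid d)$ drawn given $G_i$, independently of $(U_i,Z_i)$. The choice of a common $U_i$ for both states is immaterial, since only marginals matter. By construction $(D_i^*(0),D_i^*(1),Y_i^*(0),Y_i^*(1))$ is a function of $(G_i,U_i,\text{outcome draws})$, all of which are independent of $Z_i$, so the independence requirement of $\mathcal M_D$ holds.

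Iterating expectations over $G_i$ gives $E[D_i^*(z)\mid G_i]=p_i(z)$ and, by the within-type independence of $U_i$ and the outcome draws,
\[
E[D_i^*(z)Y_i^*(1)\mid G_i]=p_i(z)\mu_i(1).
\]
The analogous identity holds for the untreated arm. Hence the deterministic formulas reproduce $E[p_i(z)]$ and $E[\mu_i(0)+p_i(z)\Delta_i]$ exactly.

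The main obstacle is making this construction legitimate. Specifically, we must check that the new objects satisfy all parts of $\mathcal M_D$. We also need integrability, for which we assume $E|\mu_i(d)|<\infty$ or bounded outcomes. Finally, we must confirm that the observed $(D_i,Y_i)$ in the stochastic model has the same conditional means as $D_i^*(Z_i)$ and $Y_i^*(D_i^*(Z_i))$. This reduces to the mean-separability calculation above, so we will verify it carefully rather than claim full distributional equivalence.
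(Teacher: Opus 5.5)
Your proposal is correct and follows essentially the same route as the paper. For one inclusion you use degenerate kernels $p_i(z)=D_i^*(z)$ and $Q_i(\cdot\mid d)=\delta_{Y_i^*(d)}$; for the other you build the deterministic variables by randomization on an enlarged probability space and match the means via iterated expectations, within-type independence of the constructed draws, and Assumption~\ref{ass:separability}.

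The differences are cosmetic. The paper uses independent uniforms $U_{i0},U_{i1}$ and realizes $Y_i^*(d)$ as a quantile transform of uniforms $V_{id}$, which makes your ``drawn given $G_i$'' step explicitly measurable. Your common $U_i$ is equally valid for the means, and it has the side benefit that $p_i(1)\ge p_i(0)$ a.s.\ produces a constructed deterministic model with no defiers.
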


\begin{proof}[Proof sketch]
Start with an ex ante stochastic representation. Enlarge the probability space with a sufficiently rich random vector $(U_{i0},U_{i1},V_{i0},V_{i1})$ where $U_{iz},z\in\{0,1\}$ and $V_{id},d\in\{0,1\}$ are uniformly distributed and mutually independent of $Z_i$ conditional on $G_i$. Define
\[
        D_i^*(z)=\1\{U_{iz}\leq p_i(z)\}
\]
and define $Y_i^*(d)$ by applying the quantile function of $Q_i(\cdot\mid d)$ to $V_{id}$. Conditional on $G_i$, the law of $(Y_i^*(d),D_i^*(z))$ matches the stochastic kernels.

Then
 $$
 \{D_i^*(0),D_i^*(1),Y_i^*(0),Y_i^*(1)\}\mid G_i
$$ leads to \[
E[D_i^*(z)\mid G_i]=p_i(z),
\]and, using independence of the constructed uniforms,
\[
\begin{aligned}
E[Y_i^*(D_i^*(z))\mid G_i]
&=\sum_{d\in\{0,1\}}
P(D_i^*(z)=d\mid G_i)E[Y_i^*(d)\mid G_i]\\
&=\sum_{d\in\{0,1\}}p_i(d\mid z)\mu_i(d)\\
&=m_i(z).
\end{aligned}
\]Under Assumption 1, these are exactly the corresponding conditional means in the original stochastic representation. State independence then implies that both representations generate identical \(E[D_i\mid Z_i=z]\) and \(E[Y_i\mid Z_i=z]\).
Conversely, start with an ex post deterministic representation and set $p_i(z)=D_i^*(z)$ and $Q_i(\cdot\mid d)$ equal to the point mass at $Y_i^*(d)$. This is a degenerate ex ante stochastic representation that produces the same observable law and hence mean functions.
\end{proof}

The equivalence is a statement about observable conditional  means in a static cross-section. It does not imply equality of the latent objects, and it need not survive additional restrictions involving dynamics, repeated choices, panel dependence, restrictions on shocks, or assumptions about how treatment effects vary with transient choice states. The stochastic representation is therefore best understood as an ex ante structural parameterization: it treats $z\mapsto p_i(z)$ as the stable object and interprets IV weights as changes in this map.
\newpage
\bibliographystyle{apalike}
\bibliography{biblio}
\end{document}